\def\BibTeX{{\rm B\kern-.05em{\sc i\kern-.025em b}\kern-.08em
		T\kern-.1667em\lower.7ex\hbox{E}\kern-.125emX}}
\newtheorem{lemma}{\it {\textbf{Lemma}}}
\newtheorem{theorem}{\it {\textbf{Theorem}}}
\newtheorem{remark}{\it {\textbf{Remark}}}
\newtheorem{proposition}{\it {\textbf{Proposition}}}
\newenvironment{proof}{{\indent \indent \it {\textbf{Proof}}:}}{\hfill $\blacksquare$ \par}
\begin{document}
	
	\title{Low-Complexity Iterative Precoding Design for Near-field Multiuser Systems With Spatial Non-Stationarity\\}
	
\author{Mengyu Liu, Cunhua Pan, \IEEEmembership{Senior Member, IEEE}, Kangda Zhi, Hong Ren, \IEEEmembership{Member, IEEE}, \\  Cheng-Xiang Wang, \IEEEmembership{Fellow, IEEE}, Jiangzhou Wang, \IEEEmembership{Fellow, IEEE}, and Yonina C. Eldar, \IEEEmembership{Fellow, IEEE}
	\thanks{M. Liu, C. Pan, H. Ren, C. -X. Wang, J. Wang are with National Mobile Communications Research Laboratory, Southeast University, Nanjing 211189, China. (e-mail:{mengyuliu, cpan, hren, chxwang, j.z.wang}@seu.edu.cn). K. Zhi is with the School of Electrical Engineering and Computer Science, Technical University of Berlin, Berlin 10623 (e-mail: k.zhi@tu-berlin.de). Y. C. Eldar is with the Faculty of Mathematics and Computer Science, Weizmann Institute of Science, 7610001 Rehovot, Israel (e-mail: yonina.eldar@weizmann.ac.il).
		
		 Corresponding authors: Cunhua Pan and Hong Ren.}
}
	\maketitle
	
 \begin{abstract}
	Extremely large antenna arrays (ELAA) are regarded as a promising technology for supporting sixth-generation (6G) networks. However, the large number of antennas significantly increases the computational complexity in precoding design, even for linearly regularized zero-forcing (RZF) precoding. To address this issue, a series of low-complexity iterative precoding are investigated. The main idea of these methods is to avoid matrix inversion of RZF precoding. Specifically, RZF precoding is equivalent to a system of linear equations that can be solved by fast iterative algorithms, such as random Kaczmarz (RK) algorithm. Yet, the performance of RK-based precoding algorithm is limited by the energy distributions of multiple users, which restricts its application in ELAA-assisted systems. To accelerate the RK-based precoding, we introduce the greedy random Kaczmarz (GRK)-based precoding by using the greedy criterion-based selection strategy. To further reduce the complexity of the GRK-based precoding, we propose a visibility region (VR)-based orthogonal GRK (VR-OGRK) precoding that leverages near-field spatial non-stationarity, which is characterized by the concept of VR. Next, by utilizing the information from multiple hyperplanes in each iteration, we extend the GRK-based precoding to the aggregation hyperplane Kaczmarz (AHK)-based pecoding algorithm, which further enhances the convergence rate. Building upon the AHK algorithm, we propose a VR-based orthogonal AHK (VR-OAHK) precoding to further reduce the computational complexity. Furthermore, the proposed iterative precoding algorithms are proven to converge to RZF globally at an exponential rate. Simulation results show that the proposed algorithms achieve faster convergence and lower computational complexity than benchmark algorithms, and yield very similar performance to the RZF precoding.
\end{abstract}

\begin{IEEEkeywords}
	Near-field communication, extremely large-scale antenna array, spatial non-stationarity, Kaczmarz algorithm.
\end{IEEEkeywords}
\vspace{-0.4cm}
\section{Introduction}

\IEEEPARstart{S}{ixth}-generation (6G) wireless communication networks are expected to offer much higher data rates and nearly ubiquitous coverage compared to fifth-generation (5G) wireless communication networks\cite{you2021towards,8766143,10054381,9237116}. To meet these growing demands, several emerging technologies have been investigated, including millimeter-wave communication \cite{6732923,7400949}, terahertz communication \cite{6005345,8387211}, reconfigurable intelligent surface (RIS) \cite{9475160,9424177,9326394}, and extremely large antenna arrays (ELAA) \cite{10098681,BJORNSON20193}. Among these technologies, ELAA has recently garnered rapidly growing interest \cite{9170653,8869705}. By assembling a large number of antennas, ELAA achieves a high beam gain to compensate for the path loss associated with high signal frequency. Furthermore, with its large physical dimension, ELAA is expected to be integrated into large structures, such as shopping malls, smart factories, and high-speed rail stations, simultaneously supporting high-speed data transmission for a large number of devices \cite{9170651}.

The spectral efficiency of ELAA-assisted communications in hotspot scenarios with dense users could be limited if severe multiuser interference is not well tackled. In conventional massive MIMO systems, linear precoding schemes, such as zero-forcing (ZF) precoding and regularized zero-forcing precoding (RZF), are widely utilized \cite{6415388, 8169014}. Due to the
matrix inversion and the calculation of Gram matrix, the computational complexity
of such linear precoding schemes is on the order of $\mathcal{O}(N_tK^2+K^3)$, where $N_t$ and $K$ denote the number of antennas and users, respectively. Therefore, for ELAA-supported hotspot scenarios with very large number of $N_t$ and $K$, even linear precoding could result in a prohibitive computational burden. To reduce the computational complexity, several low-complexity precoding algorithms have been proposed \cite{7248580,8064675,8425997,8417575,7146060,10220155}.
The main objective of these algorithms is to approximate matrix inversion in RZF precoding, achieving a balance between performance and computational complexity.

 The low-complexity algorithms can be classified into two categories: truncated polynomial expansion (TPE) algorithms and iterative algorithms. In the former, the matrix inversion is approximated by summing a limited series of matrix multiplications. For example, in \cite{7248580} and \cite{8064675}, the Neumann Series expansion was utilized to approximate the matrix inversion in RZF precoding. However, even with this algorithm, the series of matrix multiplications still incur relatively high computational complexity. In contrast, low-complexity iterative precoding algorithms are promising due to their faster convergence rate than TPE. Specifically, the iterative algorithms first transforms the RZF precoding design problem into solving a system of linear equations. Then, a series of iterative methods for solving systems of linear equations is applied to achieve low-complexity precoding designs, including the random Kaczmarz (RK) algorithm \cite{8425997}, the Jacobi iteration \cite{8417575}, the Richardson iteration \cite{7146060}, and the Gauss Seidel iteration \cite{10220155}. The RK algorithm is widely used to solve large systems of linear equations due to its simplicity of application and fast convergence, making it particularly suitable for ELAA-assisted systems. The main idea of the RK algorithm is to project the solution for the current iteration onto the hyperplane defined by the selected equation \cite{10.1007/11830924_45}. Based on \cite{10.1007/11830924_45}, the authors of \cite{8425997} proposed the RK algorithm-based precoding algorithm to effectively reduce the computational complexity where, in each iteration, one of the systems of equations is selected based on a pre-defined probability distribution depending on the energy of the equation.

The performance of the RK-based precoding algorithm in \cite{8425997} is limited in the following two cases: (1) the channel vector energies of the users are similar, and (2) when users are located at the edge of the cell, the low-energy users will be selected with small probability \cite{9437708,strohmer2009comments,censor2009note}. In addition, due to the large number of antennas and users in the ELAA-assisted multiuser system, it is essential to further improve convergence rate of existing RK algorithms and further reduce the computational complexity in the precoding design. Hence, we focus on designing low-complexity and fast-converging RK-based precoding algorithms, making it feasible and promising for ELAA-assisted multiuser systems.

 Besides, compared with existing RK-based precoding algorithms, another difference that should be considered  is the near-field spatial non-stationarity introduced by ELAA. The Rayleigh distance, defined as $ \frac{2D^2}{\lambda}$, indicates the boundary between the near-field and far-field regions, where $D$ and $\lambda$ represent the array aperture and carrier wavelength, respectively. The large array aperture of ELAA will significantly enlarge the Rayleigh distance, extending it from a few meters to several hundred meters. Consequently, for ELAA-assisted communication systems, near-field spherical wave channel models should be considered instead of far-field planar wave channel models \cite{10220205,lu2023tutorial,10225614}. Due to the propagation property of near-field spherical wave, different parts of ELAA may have different views of the propagation environment, which is called the spatial non-stationarity \cite{10500425}. Specifically, the signal power received by the user may be primarily contributed by a portion of ELAA, termed as visibility region (VR), while the contribution of signal strength from other parts of the ELAA could be marginal due to attenuation and obstacles. In \cite{9145378}, an RK-based uplink detection algorithm was investigated with VR. The authors of \cite{10279327} analyzed the interference between subarrays by considering spatial non-stationarity and proposed an RK-based precoding design algorithm. However, existing works mainly focused on the impact of VR on channel modeling, overlooking the potential of leveraging VR information to reduce computational complexity in precoding design.

To overcome the limitations of the existing RK-based precoding and further reduce the computational complexity, we explore a low-complexity near-field precoding design for multiuser ELAA-based systems. The main contributions of this paper are summarized as follows:
\begin{itemize}
	\item  To overcome existing limitations and accelerate convergence of the RK-based precoding, we introduce the greedy random Kaczmarz (GRK)-based precoding, which selects hyperplane with larger residual instead of that with larger energy in each iteration. Next, to further reduce the computational complexity of computing the residuals in each iteration, we propose a VR-based orthogonal greedy random Kaczmarz (VR-OGRK)-based precoding by utilizing the spatial non-stationarity introduced by ELAA. Specifically, non-overlapping VRs ensure channel vector orthogonality between users, which significantly reduces the number of residuals computations per iteration. Furthermore, we show that the VR-OGRK algorithm converges globally at an exponential rate in terms of the normalized mean squared error (NMSE).
	\item Subsequently, by introducing the concept of an aggregation hyperplane, we extend GRK-based precoding to aggregation hyperplane Kaczmarz (AHK)-based precoding. In each iteration, the AHK algorithm utilizes information from multiple hyperplanes, rather than relying on a single hyperplane, thereby effectively accelerates the convergence rate.  Then, we derive that when the AHK algorithm is applied to an orthogonal set, which contains the hyperplanes that are orthogonal to each other, it benefits from both a fast convergence rate and the potential for parallel computation. To further reduce the complexity, we propose a VR-based orthogonal AHK (VR-OAHK) algorithm. Specifically, we partition the users into orthogonal and non-orthogonal sets based on an undirected graph derived from VRs of users. In each iteration, the AHK algorithm is applied sequentially to the orthogonal set and non-orthogonal set. Besides, it is proved that the VR-OAHK algorithm converges globally at an exponential rate.
	\item Simulation results are provided to demonstrate the effectiveness of the proposed low-complexity algorithms. It is shown that the performances of the algorithms are close to that of RZF precoding, while the computational complexity is significantly reduced compared to existing RK-based precoding design.
\end{itemize}

The remainder of this paper is organized as follows. In Section II, we introduce the system model with spatial non-stationarity, and formulate the precoding design problem. In Section III, we briefly introduce an existing RK-based precoding algorithm. In Section IV, by exploiting the features of VR information, we propose the VR-OGRK-based precoding and demonstrate its globally exponential convergence. We then propose the VR-OAHK-based precoding in Section V and prove its globally exponential convergence. Section VI presents the computational complexity analysis for the proposed algorithms. Finally, Sections VII and VIII present simulation results and conclusions, respectively.

\emph{Notations}: In this paper, scalars, vectors, and matrices are represented by lowercase, bold lowercase, and bold uppercase letters, respectively. The identity matrix and the all-zero vector are denoted by $\mathbf{I}$ and $\mathbf{0}$, respectively. A complex matrix of dimensions $M \times N$ is denoted as ${{\mathbb{C}}^{M \times N}}$. The operation of expectation is represented by ${\mathbb{E}}({\cdot})$. The function ${\rm{diag}}(\cdot)$ denotes the diagonalization operation. The absolute value and the real part of a complex number $a$ are denoted by $|a|$ and ${\rm{Re}(a)}$, respectively. The Euclidean norm of a vector $\mathbf{a}$ is represented as ${\| \mathbf{a} \|}_2$. For a matrix $\mathbf{A}$, ${\rm{Tr}}(\mathbf{A})$ refers to its trace, and $\| \mathbf{A} \|_F$ represents its Frobenius norm. The Hadamard product of two matrices $\mathbf{A}$ and $\mathbf{B}$ is expressed as $\mathbf{A} \odot \mathbf{B}$. A vector following a normal distribution with zero mean and unit covariance matrix is denoted as ${\cal C}{\cal N}(\mathbf{0}, \mathbf{I})$. The conjugate, transpose, and Hermitian operations are represented by ${\left( \cdot \right)^{*}}$, ${\left( \cdot \right)^{\rm{T}}}$, and ${\left( \cdot \right)^{\rm{H}}}$, respectively. 
\section{System Model}

\begin{figure}[t]
	\centering
	\includegraphics[width=0.5\linewidth]{./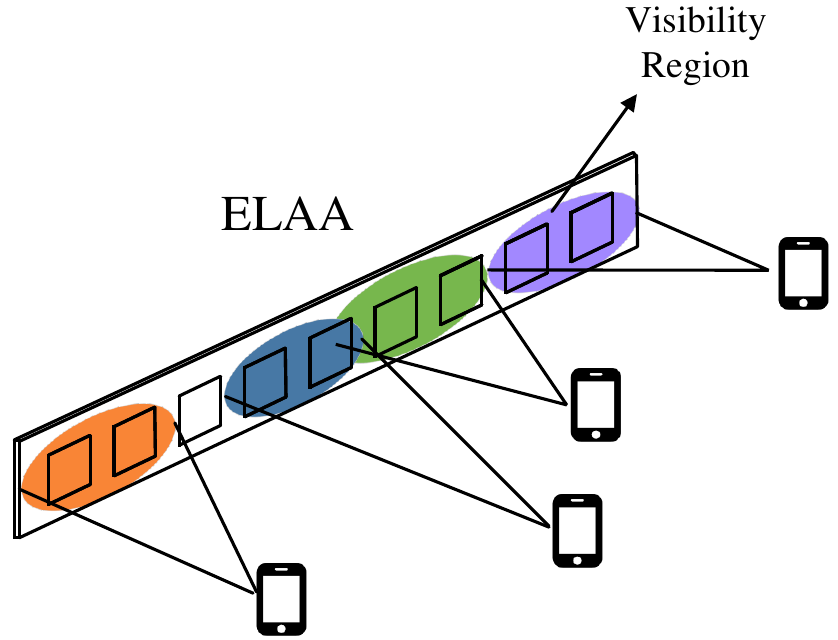}
	\captionsetup{name={Fig.},labelsep=period,singlelinecheck=off}
	\caption{{ELAA-based near-field multiuser system with spatial non-stationarities. }}\vspace{-0.4cm}
	\label{fig1}
\end{figure}

We consider a downlink ELAA-based near-field multiuser system, as illustrated in Fig. \ref{fig1}. In this system, the ELAA with an $N_t$-antenna uniform linear array (ULA) simultaneously serves $K\ (K<N_t)$ single-antenna users. The antenna spacing of the ELAA is denoted by $d=\frac{\lambda}{2}=\frac{c}{2f}$, where $\lambda$, $f$, and $c$ are the carrier wavelength, frequency, and the speed of light, respectively. For tractability, we set the center of the ELAA as the origin with the $n$-th antenna located at $(nd, 0)$, where $n=-\frac{N_t-1}{2}, \cdots, \frac{N_t-1}{2}$.
\vspace{-0.3cm}
\subsection{Near-Field Channel Model with Spatial Non-Stationarity}
Due to the large aperture of the antenna array, we assume that all users and scatterers are distributed within the near-field region of the ELAA. Consequently, the conventional far-field planar wave approximation is no longer valid, and a channel model based on spherical waves should be used to accurately capture the wireless propagation characteristics in the near field.
The near-field channel between user $k$ and the base station (BS) is modeled as \cite{10220205}
\begin{align}
	{\bar{\mathbf{h}}}_{k}=\sqrt{\frac{N_t}{L_k}}({\beta_{k,1}\mathbf{a}(\theta_{k,1},{\mathbf{d}}_{k,1})}+{\sum_{{{l}}=2}^{L_k} \beta_{k,l}\mathbf{a}(\theta_{k,l},{\mathbf{d}}_{k,l})}),
\end{align}
where $\theta_{k,l}$, ${\mathbf{d}}_{k,l}=[ {\mathbf{d}}_{k,l}^{(-\frac{N_t-1}{2})}, \cdots, {\mathbf{d}}_{k,l}^{(0)}, \cdots ,   {\mathbf{d}}_{k,l}^{(\frac{N_t-1}{2})} ]^{\rm{T}} \in \mathbb{C} ^ {N_t \times 1}$, and $L_k$ denotes the angle of arrival (AoA) of the $l$-th path, the vector containing distances between the $l$-th scatterer (user) and all antennas at BS, and the number of paths within the channel of user $k$, respectively. For ease of representation, the first path ($l=1$) represents the line-of-sight (LoS) path. The path of $2\leq l \leq L_k$ is the non-LoS path contributed by the $l$-th scatterer of user $k$, whose location is denoted by $({{\mathbf{d}}^{(0)}_{k,l}}{\rm{cos}}{\theta_{k,l}}, {{\mathbf{d}}^{(0)}_{k,l}}{\rm{sin}}{\theta_{k,l}})$, where $k=1, \cdots, K$. Based on the law of cosines, the distance between the $l$-th scatterer and the $n$-th antenna is calculated as ${{\mathbf{d}}^{(n)}_{k,l}}=\sqrt{{({\mathbf{d}}^{(0)}_{k,l}})^2+n^2d^2-2nd{{\mathbf{d}}^{(0)}_{k,l}}{\rm{cos}}\theta_{k,l}}$.
The near-field array response vector
$\mathbf{a}(\theta_{k,l}, {\mathbf{d}}_{k,l}) \in \mathbb{C}^{N_t \times 1}$ is expressed as
\begin{align}
	\mathbf{a}(\theta_{k,l}, {\mathbf{d}}_{k,l}) =\frac{1}{\sqrt{N_t}}&[e^{-j\frac{2\pi }{{\lambda}}({{\mathbf{d}}_{k,l}}^{(-\frac{N_t-1}{2})}-{{\mathbf{d}}^{(0)}_{k,l}})},\cdots,  \nonumber \\& e^{-j\frac{2\pi }{{\lambda}}{{(\mathbf{d}}_{k,l}}^{(\frac{N_t-1}{2})}-{{\mathbf{d}}^{(0)}_{k,l}})}]^{\rm{T}}.
\end{align}

In addition to the near-field spherical wave effect, the ELAA introduces spatial non-stationarities, making users receive signals only from a small portion of the whole antenna array \cite{9170651}. This phenomenon is characterized by VR, as illustrated in Fig. \ref{fig1}. Specifically, the spatial non-stationarity of the ELAA is mainly attributed to the following two reasons: 1. the large physical size of the entire array causes the channel amplitude to vary across the array and therefore the majority of the signal power is contributed by  proximity subarrays; 2. depending on the location and shape of obstacles between the ELAA and the user, significant portions of the antenna array may be affected by blockage, forming different VRs for different users \cite{yuan2022spatial,9786750}. Considering the spatial non-stationarity effect in the near field, the channel between ELAA and user $k$ is rewritten as
\begin{align}\label{channel}
	\mathbf{h}_{k}=	{\bar{\mathbf{h}}}_{k} \odot \mathbf{u}_k,
\end{align}
where $\mathbf{u}_k \in \mathbb{C}^{N_t \times 1}$  denotes the indicator vector of array VR, which is defined as follows:
\begin{equation} \label{indicate}
	[\mathbf{u}_k]_n=\left\{
	\begin{aligned}
		&1, n\in \Gamma_k,\\
		&0, n \notin \Gamma_k,
	\end{aligned}
	\right.
\end{equation}
where $\Gamma_k$ represents the set of the indices of antennas that are visible to user $k$.

Based on the experimental measurements in \cite{yuan2022spatial}, the whole array of ELAA can be evenly divided into several subarrays, each exhibiting spatial stationarity. Thus, in this paper, we assume that the whole array is divided into $S$ subarrays. Specifically, each subarray possesses $\frac{N_{\rm{t}}}{S}$ antennas, which is assumed to be an integer. Therefore, the VR indicator vector (\ref{indicate}) can be rewritten  with respect to subarrays as
\begin{equation} \label{VR}
	[\mathbf{u}_k]_{\frac{(s-1)N_t}{S}+1: \frac{sN_t}{S}}=\left\{
	\begin{aligned}
		&\mathbf{1}_{\frac{N_t}{S}}, \quad {\rm{if}} \   s \in \mathcal{M}_k,\\
		&\mathbf{0}_{\frac{N_t}{S}},\quad {\rm{else}},
	\end{aligned}
	\right.
\end{equation}
where $\mathcal{M}_k$ denotes the set of the indices of subarrays that are visible to user $k$.
\vspace{-0.6cm}
\subsection{Signal Transmission Model}
The signal received at user $k$ is represented as
\begin{align}
	{{y}}_k&=\sqrt{\rho}\mathbf{h}_k^{\rm{H}} \mathbf{F} \mathbf{s}+n_k=\sqrt{\rho}\mathbf{h}_k^{\rm{H}} \sum_{k=1}^{K}{\mathbf{f}}_{k}{s}_k+n_k,
\end{align}
where $\mathbf{F}=[{\mathbf{f}}_{1},\cdots,{\mathbf{f}}_{k},\cdots,{\mathbf{f}}_{K}] \in \mathbb{C}^{N_t \times K}$ represents the digital precoding matrix.  $\mathbf{s}=[s_{1},\cdots,s_{K}]^{\rm{T}} \in \mathbb{C}^{K\times 1}$ denotes the transmitted symbol satisfying $\mathbb{E}[\mathbf{s}\mathbf{s}^{\rm{H}}]=\mathbf{I}_{K}$. $\rho$ represents the average transmitted signal power and $n_k$ represents thermal noise at the user, which follows a complex Gaussian distribution $\mathcal{C}\mathcal{N}(0,\sigma^2)$ with $\sigma^2$ denoting the noise power.
The spectral efficiency (bit/s/Hz) of user $k$ is then calculated as
\begin{equation} \label{rate}
	{{R}_{k}}	= {\rm{log}}_2\left(1+\frac{\rho|\mathbf{h}_{k}^{\rm{H}}{\mathbf{f}}_{k}|^2}{\rho\sum_{i=1,i \neq k}^{K}|\mathbf{h}_{k}^{\rm{H}}{\mathbf{f}}_{i}|^2+\sigma^2}\right).
\end{equation}
\vspace{-0.8cm}
\subsection{Problem Formulation}
We consider the problem of designing the precoding matrix $\mathbf{F}$ to maximize the sum spectral efficiency (\ref{rate}), which is formulated as
\vspace{-0.2cm}
\begin{subequations} \label{new_optimization problem}
	\begin{align}
		&\mathop {\max }\limits_{\mathbf{F}} \quad
		 \sum_{k=1}^{K}{{R}_{k}}	
		\\
		& \textrm{s.t.}\qquad
		 \|\mathbf{F}\|^2_F \le 1,  \label{power}
	\end{align}
\end{subequations}
where constraint (\ref{power}) denotes the power budget of the BS. 

Linear precoding designs, such as ZF and RZF, have been widely utilized to address (\ref{new_optimization problem}), since they are effective with low complexity in conventional massive MIMO systems. The main idea of ZF is to project the precoding vector of user $k$ onto the null space of the remaining users' channels, ensuring that the transmitted signal to the target user is free from interference from other users. RZF offers a compromise between  interference suppression and noise enhancement.
Specifically, the precoding matrix is computed as
\begin{align}\label{precoding}
	\mathbf{F}^{\textrm{RZF}}=\beta{\mathbf{H}}(\mathbf{H}^{\rm{H}}{\mathbf{H}}+\xi{\mathbf{I}_{K}})^{-1},
\end{align}
where $\beta$, $\xi=\frac{1}{\rm{SNR}}=\frac{\sigma^2}{\rho}$ denote the power normalized factor for meeting power constraint (\ref{power}) and the regularization factor, respectively. $\mathbf{H}=[\mathbf{h}_1,\cdots,\mathbf{h}_K]\in \mathbb{C}^{N_t \times K}$ denotes the channel matrix of all users.
Note that as $\xi \rightarrow \infty$, RZF precoding reduces to maximum ratio combining (MRC) scheme. Conversely, when $\xi =0$, RZF precoding simplifies to ZF precoding.

However, due to matrix inversion, the computational complexity of the RZF precoding is $\mathcal{O}(N_tK^2+K^3)$, which increases with the number of antennas and users. Thus, considering ELAA-assisted hotspot areas with dense users, the computational complexity associated with the RZF precoding could be prohibitive, which limits the practical implementation of ELAA. To address this challenge, a series of iterative-based precoding methods are investigated\cite{8425997,8417575,7146060,10220155}. Among them, the RK-based precoding is widely used due to its simplicity of application and fast convergence. Thus, in this paper, we first briefly introduce the RK-based precoding algorithm and analyze the limitations of the RK-based precoding algorithm in later sections.
\vspace{-0.3cm}
\section{ Low-complexity Random Kaczmarz Algorithm-based Precoding Design}
In this section, we present RK-based precoding algorithm \cite{8425997,10279327} as a benchmark. Subsequently, we analyze the limitations of  existing RK-based precoding algorithm.
\vspace{-0.5cm}
\subsection{Principle of Random Kaczmarz (RK) Algorithm}
The Kaczmarz algorithm is an iterative method suitable for solving large systems of linear equations\cite{10.1007/11830924_45,9145378 }. Specifically, given a system of linear equations $\mathbf{A}\mathbf{x}=\mathbf{b}$, where $\mathbf{A}=[\mathbf{a}^{\rm{H}}_1;\cdots;\mathbf{a}^{\rm{H}}_M]\in \mathbb{C}^{M \times N}$, $\mathbf{x} \in \mathbb{C}^{N \times 1}$, and $\mathbf{b} \in \mathbb{C}^{M \times 1}$. At the $t$-th iteration, the $i$-th equation $\mathbf{a}^{\rm{H}}_i\mathbf{x}^t={b}_i$ is selected and the solution for the next iteration $\mathbf{x}^{(t+1)}$ is updated as \cite{10.1007/11830924_45}
\begin{align}\label{update}
	\mathbf{x}^{(t+1)}=\mathbf{x}^{(t)}+\frac{{b}_i-\mathbf{a}^{\rm{H}}_i\mathbf{x}^{(t)}}{\|\mathbf{a}^{\rm{H}}_i\|^2_2}\mathbf{a}_i.
\end{align}
From (\ref{update}), the main idea of the RK algorithm is to project $\mathbf{x}^{t}$ onto the hyperplane defined by the selected equation $\mathbf{a}^{\rm{H}}_i\mathbf{x}^t=b_i$. At each iteration, the RK algorithm selects the hyperplane based on the pre-determined probability distribution. The pre-determined probability distribution, often referred to as the selection strategy, is typically defined as $\mathbf{p}_i=\frac{\|\mathbf{a}^{\rm{H}}_i\|^2_2}{\|\mathbf{A}\|_F^2}$ (energy criterion) or $\mathbf{p}_i=\frac{1}{M}$ (uniform criterion), $\forall i \in [1,M]$, where $\mathbf{p}_i$ denotes the probability of selecting the $i$-th row of the systems of linear euqations. 
\vspace{-0.3cm}
\subsection{Random Kaczmarz (RK)-based precoding design }
By appling the RZF precoding (\ref{precoding}), the received signal of all users is represented as follows:
\begin{align}\label{symbols}
	\mathbf{y}={\mathbf{H}}^{\rm{H}}\mathbf{F}^{\rm{RZF}}\mathbf{s}+ \mathbf{n}  =&\beta{\mathbf{H}}^{\rm{H}}{\mathbf{H}}(\mathbf{H}^{\rm{H}}{\mathbf{H}}+\xi{\mathbf{I}_{K}})^{-1}\mathbf{s} +\mathbf{n} \nonumber \\ =&\beta{\mathbf{H}}^{\rm{H}}{\mathbf{H}}\mathbf{v} +\mathbf{n},
\end{align}
where $\mathbf{v}=(\mathbf{H}^{\rm{H}}{\mathbf{H}}+\xi{\mathbf{I}_{K}})^{-1}\mathbf{s} \in \mathbb{C}^{K \times 1}$ denotes the auxiliary vector. Subsequently, the acquisition of auxiliary vector $\mathbf{v}=(\mathbf{H}^{\rm{H}}{\mathbf{H}}+\xi{\mathbf{I}_{K}})^{-1}\mathbf{s}$ is readily transformed into an equivalent least square problem, as described by the following theorem.
\begin{theorem} \label{theorem1}
	The auxiliary vector $\mathbf{v}=(\mathbf{H}^{\rm{H}}{\mathbf{H}}+\xi{\mathbf{I}_{K}})^{-1}\mathbf{s}$  can be obtained from an equivalent least square problem
	\begin{align} \label{ls}
		\mathbf{v}= \arg \min\limits_{\mathbf{v}} \| \mathbf{G}\mathbf{v} -\mathbf{g}\|^2_2,
	\end{align}
	where $\mathbf{G}=[\mathbf{H};\sqrt{\xi} {\mathbf{I}}_K] \in \mathbb{C}^{(N_t+K) \times K}$, $\mathbf{g}=[\mathbf{0}_{N_t};\frac{\mathbf{s}}{\sqrt{\xi}}] \in \mathbb{C}^{{(N_t+K) \times 1}}$.
\end{theorem}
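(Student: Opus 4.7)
The plan is to verify the equivalence by writing down the normal equations for the least squares problem on the right-hand side and showing they coincide with the linear system that defines $\mathbf{v}$ on the left-hand side. Since $\mathbf{H}^{\rm{H}}\mathbf{H}+\xi\mathbf{I}_K$ is Hermitian positive definite for $\xi>0$, the stacked matrix $\mathbf{G}=[\mathbf{H};\sqrt{\xi}\mathbf{I}_K]$ has full column rank, so the least squares problem admits a unique minimizer characterized exactly by its normal equations. This reduces the theorem to a direct algebraic verification.

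First, I would write out the normal equation $\mathbf{G}^{\rm{H}}\mathbf{G}\mathbf{v}=\mathbf{G}^{\rm{H}}\mathbf{g}$ associated with $\min_{\mathbf{v}}\|\mathbf{G}\mathbf{v}-\mathbf{g}\|_2^2$. Using the block structure of $\mathbf{G}$, I would compute
\begin{align*}
\mathbf{G}^{\rm{H}}\mathbf{G} &= \begin{bmatrix}\mathbf{H}^{\rm{H}} & \sqrt{\xi}\,\mathbf{I}_K\end{bmatrix}\begin{bmatrix}\mathbf{H}\\ \sqrt{\xi}\,\mathbf{I}_K\end{bmatrix} = \mathbf{H}^{\rm{H}}\mathbf{H}+\xi\mathbf{I}_K,
\end{align*}
and similarly, using the block structure of $\mathbf{g}=[\mathbf{0}_{N_t};\mathbf{s}/\sqrt{\xi}]$,
\begin{align*}
\mathbf{G}^{\rm{H}}\mathbf{g} &= \mathbf{H}^{\rm{H}}\mathbf{0}_{N_t}+\sqrt{\xi}\,\mathbf{I}_K\cdot\tfrac{\mathbf{s}}{\sqrt{\xi}} = \mathbf{s}.
\end{align*}
The normal equation therefore collapses to $(\mathbf{H}^{\rm{H}}\mathbf{H}+\xi\mathbf{I}_K)\mathbf{v}=\mathbf{s}$, whose unique solution is precisely $\mathbf{v}=(\mathbf{H}^{\rm{H}}\mathbf{H}+\xi\mathbf{I}_K)^{-1}\mathbf{s}$.

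To close the argument I would note two things. First, invertibility: since $\xi=\sigma^2/\rho>0$, the matrix $\mathbf{H}^{\rm{H}}\mathbf{H}+\xi\mathbf{I}_K$ is strictly positive definite, so the inverse in the definition of $\mathbf{v}$ exists and the solution of the linear system is unique. Second, uniqueness of the least squares minimizer: because $\mathbf{G}^{\rm{H}}\mathbf{G}=\mathbf{H}^{\rm{H}}\mathbf{H}+\xi\mathbf{I}_K$ is nonsingular, $\mathbf{G}$ has full column rank, so $\|\mathbf{G}\mathbf{v}-\mathbf{g}\|_2^2$ is strictly convex in $\mathbf{v}$ and its unique stationary point obtained from the normal equation is indeed the global minimizer. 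Combining both observations yields the stated equivalence.

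There is no real obstacle here; the only point that requires care is keeping the block partitioning consistent so that dimensions match when forming $\mathbf{G}^{\rm{H}}\mathbf{G}$ and $\mathbf{G}^{\rm{H}}\mathbf{g}$, and justifying that the normal equations characterize the minimizer (rather than merely a critical point), which follows from positive definiteness of $\mathbf{G}^{\rm{H}}\mathbf{G}$. After this the result is immediate, and the reformulation sets the stage for applying Kaczmarz-type iterative solvers to the tall system $\mathbf{G}\mathbf{v}=\mathbf{g}$ in the sections that follow.
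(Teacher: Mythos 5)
Your proposal is correct and follows essentially the same route as the paper's Appendix A: both reduce the least squares problem to its normal equations $\mathbf{G}^{\rm{H}}\mathbf{G}\mathbf{v}=\mathbf{G}^{\rm{H}}\mathbf{g}$ via the (convex) optimality condition and then compute the blocks to recover $(\mathbf{H}^{\rm{H}}\mathbf{H}+\xi\mathbf{I}_K)\mathbf{v}=\mathbf{s}$. Your added remarks on positive definiteness and uniqueness of the minimizer are fine refinements of the same argument.
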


\begin{proof}
	See Appendix A.
\end{proof}
According to (\ref{symbols}) and (\ref{ls}), if we get $\mathbf{v}$, the precoded signal can be easily obtained by $\mathbf{H}\mathbf{v}=\mathbf{F}\mathbf{s}$. Thus, RZF precoding (\ref{precoding}) is equivalent to finding $\mathbf{v}$ in the least square problem of (\ref{ls}).
 If $\mathbf{g}$ lies in the range space spanned by $\mathbf{G}$, Problem (\ref{ls}) can be further reformulated as a system of linear equations, i.e., $\mathbf{G}\mathbf{v}=\mathbf{g}$, which can be effectively solved by the RK algorithm as we discussed in Section III. A.  However, this system of linear equations is inconsistent unless $\mathbf{s}=\mathbf{0}$. This inconsistency makes it challenging to obtain an accurate solution for $\mathbf{v}$. In other words, the system of linear equations $\mathbf{G}\mathbf{v}=\mathbf{g}$ may not have a solution.

 To tackle this issue, the inconsistency in the systems of linear equations needs to be removed \cite{Randomized}. Specifically, we first decompose $\mathbf{g}$ as follows
\begin{align}
	\mathbf{g}=\mathbf{w}+\boldsymbol{\psi},
\end{align}
with $\mathbf{w} \in \mathcal{R}(\mathbf{G}), \boldsymbol{\psi} \in \mathcal{R}(\mathbf{G})^{\perp}=\mathcal{K}(\mathbf{G}^{\rm{H}})$, where $\mathcal{R}(\cdot)$, $\mathcal{R}(\cdot)^{\perp}$ and $\mathcal{K}(\cdot)$ represents the range space, the orthogonal complementary space of the range space, and the kernel space. Thus, if we can accurately estimate $\mathbf{w}$ from $\mathbf{g}$, the solution of the system of linear equations $\mathbf{G}\mathbf{v}={\mathbf{w}}$ can be approximated as the solution of Problem (\ref{ls}).
To accurately estimate $\mathbf{w}$, we establish a system of linear equations as follows
\begin{align}\label{SLE}
	\mathbf{G}^{\rm{H}}\mathbf{w}=	\mathbf{G}^{\rm{H}}(\mathbf{g}-\boldsymbol{\psi})=\mathbf{G}^{\rm{H}}\mathbf{g}=\mathbf{s}.
\end{align}
\begin{algorithm}[t]
	\caption{{ RK-based precoding algorithm }}\label{iteda1}
	\begin{algorithmic}[1]
		\STATE {\bf{Input:}} $\mathbf{H}, N_t, K, \xi$, the tolerance error $\epsilon$.
		\STATE Initialize $\mathbf{V}=\mathbf{0}_{K\times K}$, $t=0$;
		\STATE Initialize $\mathbf{m}_k=0$, $\mathbf{q}_k=0, \forall k \in [1,K]$;
		\STATE Calculate and store $\|\mathbf{h}_i\|^2_2+\xi$;
		\WHILE{\rm{NMSE} $\ge \epsilon$ }
		\STATE Run the RK algorithm $K$ times: 
		\STATE Pick the $i$-th row $\mathbf{H}^{\rm{H}}$ based on the pre-determined probability distribution;
		\STATE Calculate $[\mathbf{r}^{(t)}_k]_i={[\mathbf{e}_k]_i-\mathbf{h}^{\rm{H}}_i\mathbf{m}_k^{(t)}-\xi [\mathbf{q}_k]_i^{(t)}}$;
		\STATE Calculate $[\boldsymbol{\gamma}^{(t)}_k]_i=[\mathbf{r}^{(t)}_k]_i/({\|\mathbf{h}^{\rm{H}}_i\|^2_2+\xi})$;
		\STATE Update $\mathbf{m}_k^{(t+1)}=\mathbf{m}_k^{(t)}+[\boldsymbol{\gamma}^{(t)}_k]_i\mathbf{h}_i$;
		\STATE Update $[\mathbf{q}_k]_i^{(t+1)}=[\mathbf{q}_k]_i^{(t)}+[\boldsymbol{\gamma}^{(t)}_k]_i$;
		\STATE $t=t+1$;
		\ENDWHILE
		\STATE  $\mathbf{V}=[\mathbf{q}_1,\cdots,\mathbf{q}_K]$, Calculate ${\mathbf{F}}$ as (\ref{precoding1});
		\STATE {\bf{Output:}} ${\mathbf{F}}$, $T=t$.
	\end{algorithmic}
\end{algorithm}
By using the RK algorithm to solve the system of linear equations $\mathbf{G}^{\rm{H}}\mathbf{w}=\mathbf{s}$, we obtain the estimated value ${\mathbf{w}}$. Note that, ${\mathbf{w}}$ lies in the space spanned by the columns of $\mathbf{G}$, as indicated by (\ref{update}). Next, the consistent over-determined system of linear equations is formulated as
\begin{align}\label{SLE2}
	\mathbf{G}\mathbf{v}={\mathbf{w}}.
\end{align}
Finally, we obtain the value of $\mathbf{v}$ by solving this equation. Based on the above analysis, the solution of the least square problem (\ref{ls}) can be obtained in two steps. Specifically, in Step 1, we solving the system of linear equation (\ref{SLE}) by using the RK algorithm to get the estimated value $\mathbf{w}$. In Step 2, after obtaining $\mathbf{w}$, the system of linear euqations (\ref{SLE2}) is also solved by using RK algorithm. Furthermore, we can conclude the following remarks.

\begin{remark}
	We first define the estimated value ${\mathbf{w} \in \mathbb{C}^{(N_t+K)\times 1}}$ as $[\mathbf{m};\sqrt{\xi}\mathbf{q}]$, where vector $\mathbf{q}$ is of length $K$. Due to the special structure of $\mathbf{G}$, we have
	\begin{align}
		\mathbf{G}\mathbf{v}=[\mathbf{H};\sqrt{\xi} {\mathbf{I}}_K]\mathbf{v}=[\mathbf{H}\mathbf{v};\sqrt{\xi}\mathbf{v}].
	\end{align}
Thus, the estimated solution of $\mathbf{v}$ is readily determined by extracting the scaled last $K$ elements of $\mathbf{w}$, i.e., $\mathbf{q}$, rather than solving the system of linear equations (\ref{SLE2}).
\end{remark}

\begin{remark}
	In the downlink precoding design, the objective is to compute $\mathbf{F}^{\rm{RZF}} $. However, due to the relationship $\mathbf{H}\mathbf{v}=\mathbf{F}^{\rm{RZF}} \mathbf{s}$, we can only obtain the product $\mathbf{F}^{\rm{RZF}} \mathbf{s}$, given solution $\mathbf{v}$. Therefore, to derive the precoding $\mathbf{F}^{\rm{RZF}} $, we propose to use the parallel RK algorithm as introduced in \cite{8425997,9145378}. Specifically, we run the RK algorithm in parallel $K$ times by setting $\mathbf{s}_k=\mathbf{e}_k \in \mathbb{C}^{K \times 1}$, where $\mathbf{e}_k$ denotes the $k$-th canonical basis. Then, the precoding matrix can then be computed as
	\begin{align}\label{precoding1}
		\mathbf{F}^{\rm{RZF}} =\mathbf{H}\times[\mathbf{v}_1,\cdots,\mathbf{v}_K]=\mathbf{H}\mathbf{V}.
	\end{align}
\end{remark}
Based on the above discussions, the RK-based precoding algorithm is detailed in Algorithm \ref{iteda1}, which reduces the computational complexity of RZF from $\mathcal{O}(N_tK^2+K^3)$ to $\mathcal{O}(TN_t)$. In Algorithm \ref{iteda1}, we apply the RK algorithm to (\ref{SLE}), i.e., $\mathbf{G}^{\rm{H}}\mathbf{w}=\mathbf{s}$. Then, $\mathbf{v}$ is updated based on Remark 1. Specifically, Steps 8 and 9 are to calculate the update weights, which is defined as $\frac{{b}_i-\mathbf{a}^{\rm{H}}_i\mathbf{x}^{(t)}}{\|\mathbf{a}_i\|^2_2}$ in (\ref{update}). In Steps 10 and 11, the value of solution $\mathbf{w}=[\mathbf{m};\sqrt{\xi}\mathbf{q}]$ is updated according to (\ref{update}).
\vspace{-0.8cm}
\subsection{Limitations of the RK-based Precoding Algorithm}                         
 As shown in Algorithm \ref{iteda1}, the pre-determined probability distribution of the RK algorithm determines which hyperplane is selected at each iteration. Thus, the convergence rate is mainly depending on the design of the pre-determined probability distribution. The existing RK-based precoding algorithms \cite{8425997,10279327} usually utilize the energy criterion-based selection strategy, which has a faster convergence rate than uniform criterion-based selection strategy \cite{10.1007/11830924_45}.
  Specifically, as discussed in Section III. B, the RK algorithm is applied to solve the system of linear equations $\mathbf{G}^{\rm{H}}\mathbf{w}=\mathbf{s}$. Recall that $\mathbf{G}=[\mathbf{H};\sqrt{\xi} {\mathbf{I}}_K]$, the physical significance of the RK-based precoding algorithm is that the hyperplane defined in terms of the user channel vector is selected at each iteration based on the selection strategy. Thus, existing RK-based precoding algorithms select the high-energy user with a higher probability in each iteration.
  
  However, the performance of existing RK-based precoding algorithms utilizing energy criterion-based selection strategy  is limited in the following two cases: (1) when users are located at the edge of the cell, and (2) the channel vector energies of the users are similar. For the first case, the RK-based precoding algorithm always selects users with higher energy and users with lower-energy are difficult to be selected. For the second case, the energy criterion will degenerate into a uniform criterion. In both of the above scenarios, the convergence speed of the RK-based precoding algorithm is drastically reduced. Both of the above cases are common in the ELAA-assisted hotspot scenarios in future 6G networks. 
  Furthermore, existing RK-based precoding algorithms do not consider near-field spatial non-stationarity introduced by ELAA. According to (\ref{channel}) and (\ref{VR}), the channel vectors of users have a large number of zero elements. By fully exploiting the near-field non-stationrity, the complexity of RK-based precoding can be further reduced. Thus, to overcome these limitations and further reduce the computational complexity, we propose low-complexity iterative precoding algorithms.  
\vspace{-0.3cm}
\section{VR-based Orthogonal Greedy Random Kaczmarz algorithm}
In this section, we first introduce the greedy random Kaczmarz (GRK)-based precoding algorithm by designing the greed criterion-based selection strategy. Subsequently, to further reduce the computational complexity and exploit the spatial non-stationarity, we propose the VR-based Orthogonal GRK (VR-OGRK) algorithm and provide the convergence analysis of the proposed algorithm.
\vspace{-0.4cm}
\subsection{Greedy Random Kaczmarz (GRK)-based precoding}
As discussed in Section III. C, the energy criterion-based selection strategy is not applicable to the RK-based precoding algorithm. Therefore, we introduce the greedy random Kaczmarz (GRK)-based precoding algorithm by using the greedy criterion-based selection strategy. Specifically,
Different from existing RK-based precoding, the GRK-based precoding algorithm employs a selection strategy that prioritizes the $i$-th row with larger residual, i.e., with larger $r_i={b_i-\mathbf{a}^{\rm{H}}_i\mathbf{x}^{(t)}}$ \cite{GRIEBEL20121596,doi:10.1137/17M1137747}. The greedy selection strategy of GRK-based precoding is defined as
\begin{align}
\mathbf{p}_i=\frac{|{r}_i|^2}{\|\mathbf{r}\|_2^2},
\end{align} 
where $\mathbf{r}=[r_1;\cdots,r_i,\cdots;r_M]$ denotes the residual vector. 

The GRK-based precoding effectively overcomes the limitations of existing RK-based precoding and accelerate the convergence. However, the GRK-based precoding algorithm introduces additional complexity in calculating all residuals $\mathbf{r}$ at each iteration. To further reduce the complexity, we propose the VR-based orthogonal GRK (VR-OGRK)-based precoding algorithm by using the spatial non-stationarity.
\vspace{-0.4cm}
\subsection{VR-based Orthogonal GRK (VR-OGRK)-based precoding}
To effectively exploit the spatial non-stationary brought by the ELAA-assisted system, we introduce the following lemma.
\begin{lemma}
	If the VRs of user $i$ and user $j$ do not overlap, the channels between the two users are mutually orthogonal, i.e. $\mathbf{h}^{\rm{H}}_i\mathbf{h}_j=0$, if $\mathcal{M}_i \cap \mathcal{M}_j = \emptyset$.
\end{lemma}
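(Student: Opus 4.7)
The plan is to prove this lemma by directly expanding the Hermitian inner product $\mathbf{h}_i^{\rm H}\mathbf{h}_j$ in terms of the componentwise definitions supplied by the channel model (\ref{channel}) and the subarray-based VR indicator (\ref{VR}), and showing that every summand vanishes under the disjointness hypothesis $\mathcal{M}_i\cap\mathcal{M}_j=\emptyset$.

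First, I would use $\mathbf{h}_k=\bar{\mathbf{h}}_k\odot\mathbf{u}_k$ to write
\begin{align}
\mathbf{h}_i^{\rm H}\mathbf{h}_j=\sum_{n=1}^{N_t}[\bar{\mathbf{h}}_i]_n^{*}[\bar{\mathbf{h}}_j]_n\,[\mathbf{u}_i]_n[\mathbf{u}_j]_n,
\end{align}
since the Hadamard product acts entrywise and $[\mathbf{u}_k]_n$ is real. The key observation is then that each factor $[\mathbf{u}_k]_n\in\{0,1\}$, so the product $[\mathbf{u}_i]_n[\mathbf{u}_j]_n$ equals $1$ if and only if $n\in\Gamma_i\cap\Gamma_j$ and is $0$ otherwise. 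Thus the inner product reduces to a sum over the index set $\Gamma_i\cap\Gamma_j$, and it suffices to show that this set is empty.

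Next, I would translate the disjointness of subarray VRs into disjointness of antenna VRs. By the subarray-based definition (\ref{VR}), the antenna index $n$ belongs to $\Gamma_k$ precisely when the subarray containing $n$, say subarray $s(n)\in\{1,\ldots,S\}$, belongs to $\mathcal{M}_k$. Hence $\Gamma_k=\bigcup_{s\in\mathcal{M}_k}\{\tfrac{(s-1)N_t}{S}+1,\ldots,\tfrac{sN_t}{S}\}$, and because these index blocks for different $s$ are disjoint, we get $\Gamma_i\cap\Gamma_j=\bigcup_{s\in\mathcal{M}_i\cap\mathcal{M}_j}\{\tfrac{(s-1)N_t}{S}+1,\ldots,\tfrac{sN_t}{S}\}$. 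The hypothesis $\mathcal{M}_i\cap\mathcal{M}_j=\emptyset$ therefore forces $\Gamma_i\cap\Gamma_j=\emptyset$, so every summand in the displayed expression is zero and $\mathbf{h}_i^{\rm H}\mathbf{h}_j=0$.

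There is essentially no obstacle here — the statement is a direct structural consequence of the indicator-based spatial non-stationarity model. The only care needed is to keep the two levels of bookkeeping (antenna index $n$ versus subarray index $s$) consistent when passing from $\mathcal{M}_k$ to $\Gamma_k$; once that identification is made explicit via (\ref{VR}), orthogonality is immediate. If desired, the result could be stated slightly more generally as: for any two users, $\mathbf{h}_i^{\rm H}\mathbf{h}_j=\sum_{n\in\Gamma_i\cap\Gamma_j}[\bar{\mathbf{h}}_i]_n^{*}[\bar{\mathbf{h}}_j]_n$, which makes the lemma the special case in which the summation range is empty.
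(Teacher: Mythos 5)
Your proof is correct and is exactly the computation the paper has in mind: the paper's own proof of Lemma 1 is a one-line remark that it follows by calculating the inner product of the two channel vectors, which your entrywise expansion via the Hadamard product and the identification $\Gamma_i\cap\Gamma_j=\emptyset$ makes explicit. No discrepancy to note.
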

\begin{proof}
	This can be readily proved by calculating the inner product of the two users' channel vectors.
\end{proof}

\begin{algorithm}[t]
	\caption{{VR-OGRK-based precoding algorithm}}\label{iteda3}
	\begin{algorithmic}[1]
		\STATE {\bf{Input:}} $\mathbf{H}$, $\mathbf{H}_s$, $N_t$, $K$, $\xi$, $\mathcal{X}_k$, $\mathcal{M}_k$, $\mathcal{Q}_s$, error threshold $\epsilon$.
		\STATE Repeat Algorithm \ref{iteda1} Steps $2$-$4$;
		\STATE Initialize the residuals $\mathbf{r}_k=\mathbf{e}_k,\  \forall  k \in [1,K]$
		\STATE Calculate the probability vector $[\mathbf{p}_k]_i=|[\mathbf{r}_k]_i|^2/\|\mathbf{r}_K\|_2^2, \   \forall i, k \in [1,K]$;
		\STATE Construct $\widetilde{\mathbf{h}}_k=[\mathbf{h}_k]_{\Gamma_k}$;
		\WHILE{\rm{NMSE} $\ge \epsilon$ }
		\STATE Run VR-OGRK algorithm $K$ times: 
		\STATE$[\mathbf{r}^{(t)}_k]_{{\mathcal{X}}_{i^{(t-1)}}}=[\mathbf{e}_k]_{{\mathcal{X}}_{i^{(t-1)}}}-\widetilde{\mathbf{h}}^{\rm{H}}_{{\mathcal{X}}_{i^{(t-1)}}}\mathbf{m}_k^{(t)}-\xi [\mathbf{q}^{(t)}_k]_{{\mathcal{X}}_{i^{(t-1)}}}$;
		\STATE Update the probability vector as Step 4;
		\STATE Pick the $i^{(t)}$-th row $\mathbf{H}^{\rm{H}}$ according to $\mathbf{p}_k$;
		\STATE Repeat Algorithm \ref{iteda1} Steps $8$-$12$;
		\ENDWHILE
		\STATE  $\mathbf{V}=[\mathbf{q}_1,\cdots,\mathbf{q}_K]$, Calculate ${\mathbf{F}}$ as (\ref{precoding2});
		\STATE {\bf{Output:}} ${\mathbf{F}}$, $T=t$.
	\end{algorithmic}
\end{algorithm}

Based on Lemma 1, we establish a non-orthogonal set $\mathcal{X}_k$ corresponding to user $k$, where the elements of this set represent the indices of users whose VRs overlap with that of user $k$. For the system of linear equations $\mathbf{G}^{\rm{H}}\mathbf{w}=\mathbf{s}$, the $k$-th row of $\mathbf{G}^{\rm{H}}$ can be expressed as $[\mathbf{h}_k,\sqrt{\xi}\mathbf{e}_k]$. Thus, if VRs of user $i$ and user $j$ do not overlap, the $i$-th row and the $j$-th row of $\mathbf{G}^{\rm{H}}$ are also mutually orthogonal. Theorem 2 elaborates how to apply the orthogonality in Kaczmarz algorithm.

\begin{theorem}
	Consider the linear equations $\mathbf{A}\mathbf{x}=\mathbf{b}$, solved by the Kaczmarz algorithm. At the $t$-th iteration, if the $i$-th row of $\mathbf{A}$ is selected, the residuals corresponding to the rows that are orthogonal to the $i$-th row will remain unchanged, i.e.,
	\begin{align}
		{{r}}_j^{(t+1)}=	{{r}}^{(t)}_j, {\rm{if} }\ \mathbf{a}^{\rm{H}}_j \mathbf{a}_i=0,
	\end{align}
	where $	{{r}}_j^{(t)}={{b}_j-\mathbf{a}^{\rm{H}}_j\mathbf{x}^{(t)}}$ denotes the residual corresponding to the $j$-th row at the $i$-th iteration.
\end{theorem}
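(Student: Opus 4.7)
The plan is to prove Theorem 2 by a direct one-line substitution into the Kaczmarz update rule. Starting from the definition of the residual $r_j^{(t+1)} = b_j - \mathbf{a}_j^{\mathrm{H}} \mathbf{x}^{(t+1)}$, I would expand $\mathbf{x}^{(t+1)}$ using the update formula (\ref{update}), namely $\mathbf{x}^{(t+1)} = \mathbf{x}^{(t)} + \frac{b_i - \mathbf{a}_i^{\mathrm{H}} \mathbf{x}^{(t)}}{\|\mathbf{a}_i\|_2^2}\mathbf{a}_i$, which was adopted in Section III.A for the selected row $i$.

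Substituting this expansion yields
\begin{align*}
r_j^{(t+1)} &= b_j - \mathbf{a}_j^{\mathrm{H}}\mathbf{x}^{(t)} - \frac{b_i - \mathbf{a}_i^{\mathrm{H}} \mathbf{x}^{(t)}}{\|\mathbf{a}_i\|_2^2}\,\mathbf{a}_j^{\mathrm{H}}\mathbf{a}_i.
\end{align*}
The first two terms are exactly $r_j^{(t)}$, so the only additional contribution is proportional to the inner product $\mathbf{a}_j^{\mathrm{H}}\mathbf{a}_i$. Under the hypothesis $\mathbf{a}_j^{\mathrm{H}}\mathbf{a}_i = 0$, this correction term vanishes identically and we obtain $r_j^{(t+1)} = r_j^{(t)}$, which is the claim.

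Because the argument is a trivial algebraic manipulation once the update rule is invoked, there is no real technical obstacle; the proof amounts to two lines. The main thing to be careful about is keeping the Hermitian inner product in the correct order (which matches the convention of (\ref{update}), where $\mathbf{a}_i$ appears on the right and $\mathbf{a}_i^{\mathrm{H}}$ multiplies $\mathbf{x}^{(t)}$ from the left), so that the orthogonality condition $\mathbf{a}_j^{\mathrm{H}}\mathbf{a}_i=0$ is the precise quantity that appears and cancels. With this in place, the theorem immediately justifies skipping residual recomputations for all rows in the orthogonal complement of the selected row, which is the structural property that the subsequent VR-OGRK algorithm exploits via Lemma 1.
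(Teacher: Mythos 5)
Your proof is correct and follows exactly the same route as the paper's Appendix B: substitute the Kaczmarz update for $\mathbf{x}^{(t+1)}$ into the definition of $r_j^{(t+1)}$ and note that the correction term carries the factor $\mathbf{a}_j^{\rm{H}}\mathbf{a}_i=0$. No gaps; nothing further is needed.
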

\begin{proof}
	See Appendix B.
\end{proof}

 According to Lemma 1 and Theorem 2, when selecting the $k$-th row of $\mathbf{G}^{{\rm{H}}}$, only the residuals associated with the rows in the non-orthogonal set $\mathcal{X}_k$ are affected.

 After building the set $\mathcal{X}_k$ based on VR information, we can only update the residuals corresponding to the rows whose indices are in $\mathcal{X}_k$, rather than all rows in each iteration. This method effectively reduces the complexity of the GRK-based precoding algorithm. Furthermore, the operation $\mathbf{a}_i^{\rm{H}}\mathbf{{x}}$ incurs a computational cost of $\mathcal{O}(N_t)$, which brings the primary computation burden at each iteration. Due to the presence of spatial non-stationarity, we leverage the VR information of users to further reduce the computational cost in this operation. Specifically, we construct the effective channel $\widetilde{\mathbf{h}}_k=[\mathbf{h}_k]_{\Gamma_k}\in \mathbb{C}^{|\Gamma_k|\times 1}$. Thus, the complexity of the original operation is further reduced to $\mathcal{O}(|\Gamma|)$, assuming that $|\Gamma_k| = |\Gamma|$ for all users, which simplifies the analysis.

In Step 14 of Algorithm \ref{iteda1}, the computational cost of $\mathbf{F}=\mathbf{H}\mathbf{V}$ scales on the order of $\mathcal{O}(N_tK^2)$. However, existing literature usually ignored the complexity introduced by this term, which leads to significant computational burden in precoding design. Therefore, we will again leverage the VR information to further reduce the complexity in this step.

We first define $\mathbf{H}_s \in \mathbb{C}^{ \frac{N_t}{S}\times K}$ and  $\mathcal{Q}_s$ as the channel matrix between the $s$-th subarray and users and the set of indices of users whose VRs include the $s$-th subarray, respectively. Then, based on the VR information $\mathcal{Q}_s$, we construct the equivalent matrix $[\mathbf{V}]_{\mathcal{Q}_s} \in \mathbb{C}^{|\mathcal{Q}_s| \times K}$ and  $[\mathbf{H}_s]_{\mathcal{Q}_s} \in \mathbb{C}^{\frac{N_t}{S} \times |\mathcal{Q}_s|}$. The precoding matrix is further calculated as
\begin{align} \label{precoding2}
	{\mathbf{F}}=[[\mathbf{H}_1]_{\mathcal{Q}_1} [\mathbf{V}]_{\mathcal{Q}_1}; \cdots; [\mathbf{H}_S]_{\mathcal{Q}_S} [\mathbf{V}]_{\mathcal{Q}_S}].
\end{align}
 For analytical tractability, we assume that $ |\mathcal{Q}_s| = |\mathcal{Q}| $ for all subarrays. Then, it is observed that the complexity of calculating $\mathbf{F}$ based on (\ref{precoding2}) is reduced from $O(N_tK^2)$ to $\mathcal{O}(N_t{|\mathcal{Q}|}K)$, which is beneficial for ELAA systems.

Based on the above discussions, the overall VR-OGRK-based precoding is detailed in Algorithm \ref{iteda3}. In each iteration, a row is randomly selected based on the selection strategy determined by the relative value of residual, as described in Step 4. The residuals corresponding to the rows whose indices are in $\mathcal{X}_k$ are updated, as shown in Step 8.
\vspace{-0.5cm}
\subsection{Convergence Analysis}
For the subsequent convergence analysis, we first present the following lemma:
\begin{lemma}
	In the VR-OGRK algorithm, given a system of linear equations $\mathbf{A}\mathbf{x}=\mathbf{b}$, where $\mathbf{A}\in \mathbb{C}^{M \times N}$, $\mathbf{x} \in \mathbb{C}^{N \times 1}$, the vector $(\mathbf{x}^{(t+1)}-\mathbf{x}^{(t)})$ is perpendicular to the vector $(\mathbf{x}^{(t+1)}-\mathbf{x}^{*})$, i.e.
 	\begin{align}\label{lemma1}
	 (\mathbf{x}^{(t+1)}-\mathbf{x}^{*})^{\rm{H}}(\mathbf{x}^{(t+1)}-\mathbf{x}^{(t)})=0,
	\end{align}
	where $\mathbf{x}^{*}$ denotes the optimal solution of the system.
\end{lemma}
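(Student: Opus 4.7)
The plan is to exploit the purely geometric nature of the Kaczmarz update: at each step, $\mathbf{x}^{(t+1)}$ is the orthogonal projection of $\mathbf{x}^{(t)}$ onto the hyperplane determined by the selected row, so the correction vector $\mathbf{x}^{(t+1)}-\mathbf{x}^{(t)}$ lies along the normal $\mathbf{a}_i$ of that hyperplane while $\mathbf{x}^{(t+1)}$ and any true solution $\mathbf{x}^{*}$ both lie on the hyperplane. The orthogonality then falls out in one line.

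First I would fix the iteration index $t$ and let $i=i^{(t)}$ denote the row picked by the greedy rule in Step 10 of Algorithm \ref{iteda3}. Using the VR-OGRK update, which is just the standard Kaczmarz update (\ref{update}) with a different selection rule, I write
\begin{equation}
\mathbf{x}^{(t+1)}-\mathbf{x}^{(t)}=\frac{b_i-\mathbf{a}_i^{\rm H}\mathbf{x}^{(t)}}{\|\mathbf{a}_i\|_2^{2}}\,\mathbf{a}_i=:\alpha\,\mathbf{a}_i,
\end{equation}
so the increment is a scalar multiple of $\mathbf{a}_i$. Next I would left-multiply both sides by $\mathbf{a}_i^{\rm H}$ to verify that $\mathbf{a}_i^{\rm H}\mathbf{x}^{(t+1)}=b_i$, i.e.\ $\mathbf{x}^{(t+1)}$ lies on the $i$-th hyperplane. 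Because $\mathbf{x}^{*}$ solves $\mathbf{A}\mathbf{x}=\mathbf{b}$ (in particular the $i$-th equation), we also have $\mathbf{a}_i^{\rm H}\mathbf{x}^{*}=b_i$, and subtracting gives
\begin{equation}
\mathbf{a}_i^{\rm H}\bigl(\mathbf{x}^{(t+1)}-\mathbf{x}^{*}\bigr)=0.
\end{equation}

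Finally I would assemble the claim by computing
\begin{equation}
\bigl(\mathbf{x}^{(t+1)}-\mathbf{x}^{*}\bigr)^{\rm H}\bigl(\mathbf{x}^{(t+1)}-\mathbf{x}^{(t)}\bigr)=\alpha\,\bigl(\mathbf{x}^{(t+1)}-\mathbf{x}^{*}\bigr)^{\rm H}\mathbf{a}_i=\alpha\,\bigl(\mathbf{a}_i^{\rm H}(\mathbf{x}^{(t+1)}-\mathbf{x}^{*})\bigr)^{*}=0,
\end{equation}
which is exactly (\ref{lemma1}).

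Honestly, there is no genuine obstacle here: the argument is the standard ``Pythagorean'' identity underlying all projection-based iterations, and the VR/greedy ingredients of VR-OGRK play no role beyond fixing which row $i$ is chosen. The only subtlety worth flagging is the Hermitian ordering — one must take the conjugate transpose of $(\mathbf{x}^{(t+1)}-\mathbf{x}^{*})$ rather than of $\mathbf{a}_i$ when passing from $\mathbf{a}_i^{\rm H}(\mathbf{x}^{(t+1)}-\mathbf{x}^{*})=0$ to the stated inner product, but this is handled by the single conjugation shown above. The lemma will then be used in the subsequent exponential-convergence argument via the identity $\|\mathbf{x}^{(t)}-\mathbf{x}^{*}\|_2^{2}=\|\mathbf{x}^{(t+1)}-\mathbf{x}^{*}\|_2^{2}+\|\mathbf{x}^{(t+1)}-\mathbf{x}^{(t)}\|_2^{2}$, which is an immediate consequence.
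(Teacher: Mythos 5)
Your proof is correct and follows essentially the same route as the paper: both write the increment $\mathbf{x}^{(t+1)}-\mathbf{x}^{(t)}$ as a scalar multiple of $\mathbf{a}_i$ and then verify $\mathbf{a}_i^{\rm H}(\mathbf{x}^{(t+1)}-\mathbf{x}^{*})=0$ using the update rule together with $\mathbf{a}_i^{\rm H}\mathbf{x}^{*}=b_i$. Your explicit handling of the conjugation when passing to the stated inner product is a minor clarification the paper leaves implicit, but the argument is the same.
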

	\begin{proof}
		According to (\ref{update}), we have
		\begin{align}\label{lemma2}
			(\mathbf{x}^{(t+1)}-\mathbf{x}^{(t)})=\frac{{b}_i-\mathbf{a}^{\rm{H}}_i\mathbf{x}^{(t)}}{\|\mathbf{a}^{\rm{H}}_i\|^2_2}\mathbf{a}_i.
		\end{align}
			From (\ref{lemma2}), the vector $(\mathbf{x}^{(t+1)}-\mathbf{x}^{(t)})$ lies in the subspace spanned by $\mathbf{a}_i$. Thus, to demonstrate the relationship in (\ref{lemma1}), we need to prove $\mathbf{a}^{\rm{H}}_i(\mathbf{x}^{(t+1)}-\mathbf{x}^{*})=0$, as demonstrated below:
		\begin{align}
			&\mathbf{a}^{\rm{H}}_i(\mathbf{x}^{(t+1)}-\mathbf{x}^{*})=\mathbf{a}^{\rm{H}}_i(\mathbf{x}^{(t)}-\mathbf{x}^{*}+\frac{b_i-\mathbf{a}^{\rm{H}}_i\mathbf{x}^{(t)}}{\|\mathbf{a}^{\rm{H}}_i\|^2_2}\mathbf{a}_i) \nonumber \\
			&=\mathbf{a}^{\rm{H}}_i\mathbf{x}^{(t)}-\mathbf{a}^{\rm{H}}_i\mathbf{x}^{*}+{{b}_i-\mathbf{a}^{\rm{H}}_i\mathbf{x}^{(t)}}=0,
		\end{align}
		which completes the proof of Lemma 2.
	\end{proof}

	Based on Lemma 2, we obtain the following relationship by using the Pythagorean theorem
	\begin{align}\label{gougu}
		\|\mathbf{x}^{(t+1)}-\mathbf{x}^{*}\|^2_2 =\|\mathbf{x}^{(t)}-\mathbf{x}^{*}\|^2_2-\|\mathbf{x}^{(t+1)}-\mathbf{x}^{t}\|^2_2.
	\end{align}
Based on (\ref{gougu}), we derive the following theorem to prove the convergence of our proposed algorithm.
\begin{theorem} The VR-OGRK algorithm converges as
	\begin{align}
		\|\mathbf{x}^{(t+1)}-\mathbf{x}^{*}\|^2_2 \le \eta^{t}\|\mathbf{x}^{(0)}-\mathbf{x}^{*}\|^2_2,
	\end{align}
	with global convergence rate of 
	\begin{align}
		\eta=1-\frac{\kappa^2(\mathbf{A})}{\|\mathbf{A}\|_F} < 1,
	\end{align}
	where $\kappa(\mathbf{A})=\min \limits_{\mathbf{x}-\mathbf{x}^* \in \mathcal{R}(\mathbf{A}^{\rm{H}}) }\frac{\|\mathbf{A}(\mathbf{x}-\mathbf{x}^*)\|_{2}}{\|\mathbf{x}-\mathbf{x}^*\|_2}$.
\end{theorem}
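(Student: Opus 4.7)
The natural starting point is the Pythagorean identity already established in (\ref{gougu}),
\begin{align*}
\|\mathbf{x}^{(t+1)}-\mathbf{x}^{*}\|_2^{2}
=\|\mathbf{x}^{(t)}-\mathbf{x}^{*}\|_2^{2}-\|\mathbf{x}^{(t+1)}-\mathbf{x}^{(t)}\|_2^{2},
\end{align*}
so the task reduces to lower-bounding the one-step displacement. Plugging in the Kaczmarz update rule (\ref{update}) with the row $\mathbf{a}_{i_t}$ selected at iteration $t$ gives the closed-form expression $\|\mathbf{x}^{(t+1)}-\mathbf{x}^{(t)}\|_2^{2}=|r_{i_t}^{(t)}|^{2}/\|\mathbf{a}_{i_t}\|_2^{2}$, which couples the progress to the magnitude of the selected residual and the norm of the selected row.

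Next I would take the conditional expectation with respect to the greedy selection distribution $p_i=|r_i^{(t)}|^{2}/\|\mathbf{r}^{(t)}\|_2^{2}$ used in Algorithm \ref{iteda3}. This yields
\begin{align*}
\mathbb{E}\bigl[\|\mathbf{x}^{(t+1)}-\mathbf{x}^{(t)}\|_2^{2}\bigr]
=\sum_{i}\frac{|r_i^{(t)}|^{4}}{\|\mathbf{r}^{(t)}\|_2^{2}\,\|\mathbf{a}_i\|_2^{2}}.
\end{align*}
Using the elementary bound $\|\mathbf{a}_i\|_2^{2}\le\|\mathbf{A}\|_F^{2}$ (or $\|\mathbf{A}\|_F$, depending on the normalization intended in the statement) and the power-mean inequality $\sum_i|r_i|^{4}\ge\|\mathbf{r}\|_2^{4}/\sum_i 1$, followed by the sharper step of recognising $\sum_i|r_i^{(t)}|^{4}/\|\mathbf{r}^{(t)}\|_2^{2}\ge\|\mathbf{r}^{(t)}\|_2^{2}$ in the greedy case, collapses the sum into a bound proportional to $\|\mathbf{r}^{(t)}\|_2^{2}/\|\mathbf{A}\|_F$.

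The final ingredient converts the residual norm into the error norm via the definition of $\kappa(\mathbf{A})$. Because the iterates generated by the Kaczmarz updates always satisfy $\mathbf{x}^{(t)}-\mathbf{x}^{*}\in\mathcal{R}(\mathbf{A}^{\mathrm H})$ (an induction on (\ref{update}) starting from $\mathbf{x}^{(0)}=\mathbf{0}$), the definition of $\kappa$ applies and delivers
\begin{align*}
\|\mathbf{r}^{(t)}\|_2^{2}=\|\mathbf{A}(\mathbf{x}^{(t)}-\mathbf{x}^{*})\|_2^{2}\ge\kappa^{2}(\mathbf{A})\,\|\mathbf{x}^{(t)}-\mathbf{x}^{*}\|_2^{2}.
\end{align*}
Substituting back into the Pythagorean identity yields the per-step contraction $\mathbb{E}\|\mathbf{x}^{(t+1)}-\mathbf{x}^{*}\|_2^{2}\le\eta\,\|\mathbf{x}^{(t)}-\mathbf{x}^{*}\|_2^{2}$ with $\eta=1-\kappa^{2}(\mathbf{A})/\|\mathbf{A}\|_F<1$, and unrolling the recursion delivers the geometric rate claimed in the theorem.

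Finally, the VR-orthogonality refinement embodied in Algorithm \ref{iteda3} has to be shown not to harm the above bound. Here I would invoke Theorem 2: for any row $\mathbf{a}_j$ orthogonal to the selected $\mathbf{a}_{i_t}$, the residual $r_j$ is preserved exactly, so the sampling distribution computed from the cheaply-updated residuals over $\mathcal{X}_k$ is identical to the one that would be computed from the full residual vector. Hence the stochastic dynamics—and therefore the contraction analysis above—are unchanged; only the per-iteration arithmetic is cheaper. The main obstacle I anticipate is the second step: justifying the inequality that takes $\sum_i|r_i^{(t)}|^{4}/(\|\mathbf{r}^{(t)}\|_2^{2}\|\mathbf{a}_i\|_2^{2})$ down to the claimed form of $\eta$, since a loose relaxation here would produce an $\|\mathbf{A}\|_F^{2}$ denominator rather than the $\|\mathbf{A}\|_F$ appearing in the statement, so careful accounting of how the greedy weighting amplifies large residuals relative to uniform selection is essential.
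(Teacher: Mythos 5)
Your skeleton is the same as the paper's: the Pythagorean identity from Lemma 2, conditional expectation under the greedy distribution $p_i=|r_i^{(t)}|^2/\|\mathbf{r}^{(t)}\|_2^2$ to get the subtracted term $\frac{1}{\|\mathbf{r}^{(t)}\|_2^2}\sum_i |r_i^{(t)}|^4/\|\mathbf{a}_i\|_2^2$, and finally the passage from $\|\mathbf{r}^{(t)}\|_2$ to the error norm via $\kappa(\mathbf{A})$ (your induction showing $\mathbf{x}^{(t)}-\mathbf{x}^*\in\mathcal{R}(\mathbf{A}^{\rm H})$, and your use of Theorem 2 to argue that the VR-based residual bookkeeping leaves the stochastic dynamics unchanged, are consistent with—and slightly more explicit than—what the paper does). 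The genuine gap is exactly the step you flagged: the "sharper" inequality $\sum_i|r_i^{(t)}|^4\ge\|\mathbf{r}^{(t)}\|_2^4$ is false in general (Cauchy--Schwarz gives the reverse, with equality only when a single residual is nonzero), and the power-mean fallback $\sum_i|r_i|^4\ge\|\mathbf{r}\|_2^4/M$ only yields the weaker rate $1-\kappa^2(\mathbf{A})/(M\|\mathbf{A}\|_F^2)$. The correct closing move is a weighted Cauchy--Schwarz applied before discarding the row norms:
\begin{align*}
\|\mathbf{r}^{(t)}\|_2^4=\Bigl(\sum_i \tfrac{|r_i^{(t)}|^2}{\|\mathbf{a}_i\|_2}\,\|\mathbf{a}_i\|_2\Bigr)^2
\le \Bigl(\sum_i \tfrac{|r_i^{(t)}|^4}{\|\mathbf{a}_i\|_2^2}\Bigr)\|\mathbf{A}\|_F^2,
\end{align*}
so the subtracted term is at least $\|\mathbf{r}^{(t)}\|_2^2/\|\mathbf{A}\|_F^2=\|\mathbf{A}(\mathbf{x}^{(t)}-\mathbf{x}^*)\|_2^2/\|\mathbf{A}\|_F^2\ge \kappa^2(\mathbf{A})\|\mathbf{x}^{(t)}-\mathbf{x}^*\|_2^2/\|\mathbf{A}\|_F^2$, giving the per-step contraction with $\eta=1-\kappa^2(\mathbf{A})/\|\mathbf{A}\|_F^2$ in expectation.

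Two remarks on the comparison. First, the paper's own derivation glosses over the same point: it passes from $\frac{1}{\|\mathbf{r}^{(t)}\|_2^2}\cdot\frac{\sum_i|r_i^{(t)}|^4}{\|\mathbf{A}\|_F^2}$ to $\frac{\|\mathbf{r}^{(t)}\|_2^2}{\|\mathbf{A}\|_F^2}$ as if $\sum_i|r_i^{(t)}|^4=\|\mathbf{r}^{(t)}\|_2^4$, so your suspicion that this step needs "careful accounting" is well founded; the weighted Cauchy--Schwarz above is the clean repair for both your argument and the paper's. Second, the resulting denominator is $\|\mathbf{A}\|_F^2$, matching the paper's proof body; the unsquared $\|\mathbf{A}\|_F$ in the theorem statement (which your draft tried to reproduce) is evidently a typo and should not be targeted.
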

\begin{proof}
	First, according to (\ref{gougu}), we have 
	\begin{align}
		&\mathbb{E}[\|\mathbf{x}^{(t+1)}-\mathbf{x}^{*}\|^2_2] =\|\mathbf{x}^{(t)}-\mathbf{x}^{*}\|^2_2-\mathbb{E}[\|\mathbf{x}^{(t+1)}-\mathbf{x}^{(t)}\|^2_2] \nonumber \\
		&=\|\mathbf{x}^{(t)}-\mathbf{x}^{*}\|^2_2-\sum_{i=1}^{M}\frac{|\mathbf{r}_i|^2}{\|\mathbf{r}^{(t)}\|_2^2}\frac{ |{b}_i-\mathbf{a}^{\rm{H}}_i\mathbf{x}^{(t)}|^2}{\|\mathbf{a}_i\|^2_2} \nonumber \\
		& =\|\mathbf{x}^{(t)}-\mathbf{x}^{*}\|^2_2-\frac{1}{\|\mathbf{r}^{(t)}\|_2^2}\sum_{i=1}^{M}\frac{ |\mathbf{r}_i|^4}{\|\mathbf{a}_i\|^2_2}\nonumber \\
		& \le \|\mathbf{x}^{(t)}-\mathbf{x}^{*}\|^2_2-\frac{1}{\|\mathbf{r}^{(t)}\|_2^2}\frac{\sum_{i=1}^{M} |\mathbf{r}_i|^4}{\sum_{i=1}^{M}\|\mathbf{a}_i\|^2_2} \nonumber \\
    	&= \|\mathbf{x}^{(t)}-\mathbf{x}^{*}\|^2_2-\frac{\|\mathbf{r}^{(t)}\|_2^2}{\sum_{i=1}^{M}\|\mathbf{a}_i\|^2_2} \nonumber \\
		& = \|\mathbf{x}^{(t)}-\mathbf{x}^{*}\|^2_2-\frac{\|\mathbf{A}(\mathbf{x}^{(t)}-\mathbf{x}^*)\|_2^2}{\|\mathbf{A}\|^2_F}  \nonumber \\
		& \le (1-\frac{\kappa^2(\mathbf{A})}{\|\mathbf{A}\|^2_F})\|\mathbf{x}^{(t)}-\mathbf{x}^{*}\|^2_2,
	\end{align}
	where $\kappa(\mathbf{A})=\min \limits_{\mathbf{x}-\mathbf{x}^* \in \mathcal{R}(\mathbf{A}^{\rm{H}}) }\frac{\|\mathbf{A}(\mathbf{x}-\mathbf{x}^*)\|_{2}}{\|\mathbf{x}-\mathbf{x}^*\|_2}$ denotes the minimum singular value.
	Define $\eta=\frac{\kappa^2(\mathbf{A})}{\|\mathbf{A}\|^2_F} \le 1$, which represents the global convergence rate.
	
	Hence, the proof of Theorem 3 is completed.
\end{proof}
\vspace{-0.2cm}
\section{VR-based orthogonal aggregation hyperplane Kaczmarz algorithm}
We extend the GRK-based precoding to the aggregation hyperplane Kaczmarz (AHK)-based precoding by exploiting the information from multiple hyperplanes in each iteration. To further reduce computational complexity, we exploit VR information to  propose the VR-based orthogonal aggregation hyperplane Kaczmarz (VR-OAHK)-based precoding and analyze its convergence.
\vspace{-0.5cm}
\subsection{Aggregation Hyperplane Kaczmarz-based Precoding}
As discussed in Section IV, at the $t$-th iteration of the VR-OGRK algorithm, if the $i$-th row of matrix $\mathbf{A}$ is selected, the solution of the $(t+1)$-th iteration is updated as follows
 \begin{align}\label{update3}
 	\mathbf{x}^{(t+1)}&=\mathbf{x}^{(t)}+\frac{{b}_i-\mathbf{a}^{\rm{H}}_i\mathbf{x}^{(t)}}{\|\mathbf{a}^{\rm{H}}_i\|^2_2}\mathbf{a}_i=\mathbf{x}^{(t)}+\frac{{r}^{(t)}_i}{\|\mathbf{a}^{\rm{H}}_i\|^2_2}\mathbf{a}_i.
 \end{align}
Thus, at each iteration, the solution of the next iteration $\mathbf{x}^{(t+1)}$ is updated by projecting the current solution $\mathbf{x}^{(t)}$ into the selected hyperplane, which is defined by $\mathbf{a}^{\rm{H}}_i\mathbf{x}={b}_i$. This means that we only use the information from one hyperplane at each iteration.

 A natural question arises: \textit{is it possible to utilize information from multiple hyperplanes instead of a single hyperplane at each iteration?} If the information from multiple hyperplanes are jointly utilized, the convergence rate can be further enhanced. 
  Inspired by the surrogate constraint proposed in \cite{glover1968surrogate,glover2003tutorial}, which is introduced as a heuristic method to solve integer programming problems and graph optimization problems, we next propose an aggregation hyperplane Kaczmarz (AHK) algorithm to jointly use multiple hyperplane. Specifically, AHK algorithm constructs an aggregation hyperplane by linearly combining multiple hyperplanes with predefined weights. The next iteration solution $\mathbf{x}^{(t+1)}$ is then updated by projecting $\mathbf{x}^{(t)}$ onto the aggregation hyperplane, which incorporates information from multiple hyperplanes.
 
The update criteria of the AHK methods is expressed as
 \begin{align}\label{update4}
 	\mathbf{x}^{(t+1)}
 	&=\mathbf{x}^{(t)}+\frac{(\boldsymbol{\varphi}^{(t)})^{\rm{H}}(\mathbf{b}-\mathbf{A}\mathbf{x}^{(t)})}{\|\mathbf{A}^{\rm{H}}\boldsymbol{\varphi}^{(t)}\|^2_2}\mathbf{A}^{\rm{H}}\boldsymbol{\varphi}^{(t)} \nonumber \\
 	&=\mathbf{x}^{(t)}+\frac{(\boldsymbol{\varphi}^{(t)})^{\rm{H}}\mathbf{r}^{(t)}}{\|\mathbf{A}^{\rm{H}}\boldsymbol{\varphi}^{(t)}\|^2_2}\mathbf{A}^{\rm{H}}\boldsymbol{\varphi}^{(t)},
 \end{align}
 where $\boldsymbol{\varphi}$ denotes the aggregation coefficient vector. Note that, when $\boldsymbol{\varphi}=\mathbf{e}_i$, where $i$ represents the index of the row which is selected at the $t$-th iteration, (\ref{update4}) reduces to the update of traditional Kaczmarz algorithm.
 
In (\ref{update4}), the solution of the next iteration $\mathbf{x}^{(t+1)}$ is updated by projecting the current solution $\mathbf{x}^{(t)}$ onto the aggregation hyperplane which is defined by 
\begin{align}
	(\boldsymbol{\varphi}^{(t)})^{\rm{H}}\mathbf{A}\mathbf{x}=(\boldsymbol{\varphi}^{(t)})^{\rm{H}}\mathbf{b}.
\end{align}
The aggregation hyperplane retains more information of the matrix $\mathbf{A}$ and vector $\mathbf{b}$ compared to GRK algorithm. The AHK algorithm linearly combines multiple hyperplanes into an aggregation hyperplane with predefined weights. Thus, at each iteration, more information from the matrix $\mathbf{A}$ and vector $\mathbf{b}$ is utilized to accelerate convergence. 

Recall that, in GRK algorithm, the residuals of the current solution on all hyperplanes, i.e., $\mathbf{r}^{(t)}$, are obtained at each iteration, as shown in Step 8 of Algorithm \ref{iteda3}. Thus, the aggregation coefficient vector $\boldsymbol{\varphi}^{(t)}$ is defined as
 \begin{align} \label{coefficient}
 [{\boldsymbol{\varphi}}^{(t)}]_i=\frac{{b}_i-\mathbf{a}^{\rm{H}}_i\mathbf{x}^{(t)}}{\|\mathbf{a}_i^{\rm{H}}\|^2_2}, \forall i \in [1,M].
 \end{align}
 \vspace{-0.7cm}
 \subsection{VR-based Orthogonal AHK (VR-OAHK)-based Precoding}
The AHK-based precoding requires the generation of an aggregation hyperplane, introducing additional complexity of $\mathcal{O}(N_tK)$ at each iteration. While this algorithm can substantially accelerate convergence, it may not result in a significant reduction in overall complexity.
 To address this issue, we propose VR-based orthogonal aggregation hyperplane Kaczmarz (VR-OAHK) algorithm. Specifically, by fully exploiting the VR features and further accelerating the convergence, we first divide $K$ users into two sets: the orthogonal set $\mathcal{F}$ and the non-orthogonal set $\mathcal{V}$. The orthogonal set $\mathcal{F}$ consists of users whose VRs do not overlap, i.e. ${|\mathcal{M}_i \cap \mathcal{M}_j |} > 0, \forall i,j \in \mathcal{F}$, whereas the opposite holds true for the non-orthogonal set $\mathcal{V}$. In the following, we prove some promising properties of the proposed AHK algorithm on the orthogonal set, such as faster convergence and parallelization features. In each iteration, we first apply AHK algorithm using information from the hyperplanes corresponding to $\mathcal{F}$. Subsequently, we apply AHK algorithm using information from the hyperplanes corresponding to $\mathcal{V}$.
 
To utilize the potential of exploiting spatial non-stationarity in simplifying algorithms, we derive the following theorem.
\begin{theorem}\label{lemma3}
		Consider using the AHK algorithm to solve a system of linear equations $\mathbf{A}\mathbf{x}=\mathbf{b}$, where $\mathbf{A}\in \mathbb{C}^{M \times N}$, $\mathbf{x} \in \mathbb{C}^{N \times 1}$. We assume that there exists a set of mutually orthogonal hyperplanes in the linear system, denoted as $\mathcal{F}$. At the $t$-th iteration, if the aggregation hyperplane is generated by the hyperplanes in $\mathcal{F}$, the updated solution $\mathbf{x}^{(t+1)}$ will lie in the solution space spanned by the hyperpalnes in $\mathcal{F}$, i.e.,
		\begin{align}
			\mathbf{A}_{[\mathcal{F},:]}\mathbf{x}^{(t+1)}=\mathbf{b}_{\mathcal{F}}.
		\end{align}
\end{theorem}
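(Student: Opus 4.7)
The plan is to verify the theorem by direct substitution into the AHK update formula (\ref{update4}), exploiting the mutual orthogonality $\mathbf{a}_i^{\rm{H}}\mathbf{a}_j=0$ for distinct $i,j\in\mathcal{F}$ to both simplify the step length and decouple the projection row by row. I would treat $\boldsymbol{\varphi}^{(t)}$ as the vector defined in (\ref{coefficient}) restricted to $\mathcal{F}$, i.e.\ $[\boldsymbol{\varphi}^{(t)}]_i = (b_i-\mathbf{a}_i^{\rm{H}}\mathbf{x}^{(t)})/\|\mathbf{a}_i\|_2^2 = r_i^{(t)}/\|\mathbf{a}_i\|_2^2$ for $i\in\mathcal{F}$ and zero elsewhere, so that $\mathbf{A}^{\rm{H}}\boldsymbol{\varphi}^{(t)} = \sum_{i\in\mathcal{F}}\tfrac{r_i^{(t)}}{\|\mathbf{a}_i\|_2^2}\mathbf{a}_i$.

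Next I would compute the scalar coefficient in (\ref{update4}). Under the orthogonality assumption, the cross terms in $\|\mathbf{A}^{\rm{H}}\boldsymbol{\varphi}^{(t)}\|_2^2$ vanish, so
\begin{align*}
\|\mathbf{A}^{\rm{H}}\boldsymbol{\varphi}^{(t)}\|_2^2 = \sum_{i\in\mathcal{F}} \frac{|r_i^{(t)}|^2}{\|\mathbf{a}_i\|_2^4}\,\|\mathbf{a}_i\|_2^2 = \sum_{i\in\mathcal{F}} \frac{|r_i^{(t)}|^2}{\|\mathbf{a}_i\|_2^2}.
\end{align*}
A parallel computation shows $(\boldsymbol{\varphi}^{(t)})^{\rm{H}}\mathbf{r}^{(t)}$ equals exactly the same sum, so the ratio in (\ref{update4}) collapses to $1$ and the update simplifies to $\mathbf{x}^{(t+1)} = \mathbf{x}^{(t)} + \sum_{i\in\mathcal{F}} \tfrac{r_i^{(t)}}{\|\mathbf{a}_i\|_2^2}\mathbf{a}_i$. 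This clean form is the key structural consequence of using an orthogonal aggregation set.

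Finally, I would left-multiply this simplified update by $\mathbf{a}_j^{\rm{H}}$ for an arbitrary $j\in\mathcal{F}$. By orthogonality, only the $i=j$ term survives in the sum, giving
\begin{align*}
\mathbf{a}_j^{\rm{H}}\mathbf{x}^{(t+1)} = \mathbf{a}_j^{\rm{H}}\mathbf{x}^{(t)} + \frac{r_j^{(t)}}{\|\mathbf{a}_j\|_2^2}\|\mathbf{a}_j\|_2^2 = \mathbf{a}_j^{\rm{H}}\mathbf{x}^{(t)} + b_j - \mathbf{a}_j^{\rm{H}}\mathbf{x}^{(t)} = b_j.
\end{align*}
Stacking these equalities over $j\in\mathcal{F}$ yields $\mathbf{A}_{[\mathcal{F},:]}\mathbf{x}^{(t+1)}=\mathbf{b}_{\mathcal{F}}$, completing the proof.

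The argument is essentially two short algebraic simplifications chained together; I do not foresee a serious obstacle. The one point that requires a little care is recognizing that the scalar step-size in (\ref{update4}) reduces to unity \emph{precisely} when the aggregation is taken over an orthogonal subset — this is what makes a single AHK step behave like $|\mathcal{F}|$ independent classical Kaczmarz projections carried out in parallel, and it is the structural fact that later justifies both the parallelization and the faster convergence claims for the VR-OAHK scheme.
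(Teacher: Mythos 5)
Your proof is correct and follows essentially the same route as the paper's Appendix C: exploit orthogonality within $\mathcal{F}$ so that the aggregation step-size collapses to unity and only the $j$-th term survives when projecting onto $\mathbf{a}_j^{\rm{H}}$, yielding $\mathbf{a}_j^{\rm{H}}\mathbf{x}^{(t+1)}=b_j$ for every $j\in\mathcal{F}$. The only cosmetic difference is that you first derive the decoupled update $\mathbf{x}^{(t+1)}=\mathbf{x}^{(t)}+\sum_{i\in\mathcal{F}}[\boldsymbol{\varphi}^{(t)}]_i\mathbf{a}_i$ (which is exactly the paper's Proposition 1, proved in Appendix D) and then project, whereas the paper projects first and simplifies afterwards.
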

\begin{proof}
	See Appendix C.
\end{proof}
\begin{algorithm}[t]
	\caption{{VR-OAHK-based precoding algorithm}}\label{iteda4}
	\begin{algorithmic}[1]
		\STATE {\bf{Input:}} $\mathbf{H}$, $\mathbf{H}_s$, $N_t$, $K$, $\xi$, $\mathcal{F}$, $\mathcal{V}$, $\Gamma_k$, $\mathcal{Q}_s$, threshold $\epsilon$.
		\STATE Repeat Algorithm \ref{iteda1} Steps $2$-$4$;
		\STATE Initialize ${\mathbf{r}^{\rm{or}}_k}=\mathbf{0}_K, {\mathbf{r}^{\rm{nor}}_k}=\mathbf{0}_K,{\boldsymbol{\varphi}^{\rm{or}}_k}=\mathbf{0}_K,{\boldsymbol{\varphi}^{\rm{nor}}_k}=\mathbf{0}_K,\  \forall  k \in [1,K]$;
		\STATE Construct $\widetilde{\mathbf{h}}_k=[\mathbf{h}_k]_{\Gamma_k}$;
		\WHILE{\rm{NMSE} $\ge \epsilon$ }
		\STATE Run the VR-OAHK algorithm $K$ times: 
		\STATE For orthogonal user group $\mathcal{F}$:
		\STATE$[{\mathbf{r}^{\rm{or}}_k}^{(t)}]_{\mathcal{F}}=[\mathbf{e}_k]_{\mathcal{F}}-\widetilde{\mathbf{h}}^{\rm{H}}_{\mathcal{F}}\mathbf{m}_k^{(t)}-\xi [\mathbf{q}^{(t)}_k]_{\mathcal{F}}$;
		\STATE$[{\boldsymbol{\varphi}^{\rm{or}}_k}^{(t)}]_{i}=\frac{[{\mathbf{r}^{\rm{or}}_k}^{(t)}]_i}{{\|\mathbf{h}_i\|^2_2+\xi}}, \forall i \in \mathcal{F}$;
		\STATE  $\mathbf{m}_k^{(t+\frac{1}{2})}=\mathbf{m}_k^{(t)}+\sum_{i\in \mathcal{F}}^{|\mathcal{F}|}[{\boldsymbol{\varphi}^{\rm{or}}_k}^{(t)}]_i\mathbf{h}_i$;
		\STATE  $\mathbf{q}_k^{(t+\frac{1}{2})}=\mathbf{q}_k^{(t)}+{\boldsymbol{\varphi}^{\rm{or}}_k}^{(t)}$;
		\STATE For non-orthogonal user group $\mathcal{V}$:
		\STATE$[{\mathbf{r}^{\rm{nor}}_k}^{(t)}]_{\mathcal{V}}=[\mathbf{e}_k]_{\mathcal{V}}-\widetilde{\mathbf{h}}^{\rm{H}}_{\mathcal{V}}\mathbf{m}_k^{(t+\frac{1}{2})}-\xi [\mathbf{q}^{(t+\frac{1}{2})}_k]_{\mathcal{V}}$;
		\STATE$[{\boldsymbol{\varphi}^{\rm{nor}}_k}^{(t)}]_{i}=\frac{[{\mathbf{r}^{\rm{nor}}_k}^{(t)}]_i}{{\|\mathbf{h}_i\|^2_2+\xi}}, \forall i \in \mathcal{V}$;
		\STATE  ${{\gamma}^{\rm{nor}}_k}^{(t)}=\frac{({\boldsymbol{\varphi}^{\rm{nor}}_k}^{(t)})^{\rm{H}}{\mathbf{r}^{\rm{nor}}_k}^{(t)}}{\|\mathbf{H}{\boldsymbol{\varphi}^{\rm{nor}}_k}^{(t)}\|^2_2+\xi\|{\boldsymbol{\varphi}^{\rm{nor}}_k}^{(t)}\|^2_2}$;
		\STATE  $\mathbf{m}_k^{(t+1)}=\mathbf{m}_k^{(t+\frac{1}{2})}+{{\gamma}^{\rm{nor}}_k}^{(t)}\mathbf{H}{\boldsymbol{\varphi}^{\rm{nor}}_k}^{(t)}$;
		\STATE  $\mathbf{q}_k^{(t+1)}=\mathbf{q}_k^{(t+\frac{1}{2})}+{{\gamma}^{\rm{nor}}_k}^{(t)}{\boldsymbol{\varphi}^{\rm{nor}}_k}^{(t)}$;
		\STATE $t=t+1$;
		\ENDWHILE
		\STATE  $\mathbf{V}=[\mathbf{q}_1,\cdots,\mathbf{q}_K]$, Calculate ${\mathbf{F}}$ as (\ref{precoding2});
		\STATE {\bf{Output:}} ${\mathbf{F}}$, $T=t$.
	\end{algorithmic}
\end{algorithm}

According to Theorem 4, if the orthogonal set $\mathcal{F}$ can be established, only one iteration is required to converge to the solution space spanned by the hyperplanes in $\mathcal{F}$, which further accelerates convergence.
Furthermore, the orthogonality among different users in the set $\mathcal{F}$ enables parallelization of the algorithm, as detailed in the following proposition.
\begin{proposition}
	A system of linear equations $\mathbf{A}\mathbf{x}=\mathbf{b}$, where $\mathbf{A}\in \mathbb{C}^{M \times N}$, $\mathbf{x} \in \mathbb{C}^{N \times 1}$, is considered to be solved by the AHK  algorithm. We assume that there exists a set of mutually orthogonal hyperplanes in the linear system, denoted as $\mathcal{F}$. If the aggregation hyperplane is generated by the hyperplanes in $\mathcal{F}$, the solution of the next iteration is updated as
	\begin{align}\label{orthogonal}
		\mathbf{x}^{(t+1)}=\mathbf{x}^{(t)}+\sum_{i \in \mathcal{F}}^{|{\mathcal{F}}|} [{\boldsymbol{\varphi}}^{(t)}]_i\mathbf{a}_i.
	\end{align}
\end{proposition}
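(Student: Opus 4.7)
The plan is to specialize the general AHK update formula (\ref{update4}) to the case where the aggregation hyperplane is constructed only from the rows indexed by $\mathcal{F}$, and then exploit the pairwise orthogonality $\mathbf{a}_i^{\rm{H}}\mathbf{a}_j=0$ for distinct $i,j\in\mathcal{F}$ to show that the scalar coefficient sitting in front of $\mathbf{A}^{\rm{H}}\boldsymbol{\varphi}^{(t)}$ in (\ref{update4}) collapses to exactly one. Once this collapse is established, (\ref{update4}) immediately takes the announced form (\ref{orthogonal}).

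First I would fix the aggregation coefficient vector as $[\boldsymbol{\varphi}^{(t)}]_i = r_i^{(t)}/\|\mathbf{a}_i\|_2^2$ for $i\in\mathcal{F}$ and $[\boldsymbol{\varphi}^{(t)}]_i=0$ otherwise, in keeping with (\ref{coefficient}) and the hypothesis that only hyperplanes in $\mathcal{F}$ enter the aggregation. Under this restriction, $\mathbf{A}^{\rm{H}}\boldsymbol{\varphi}^{(t)} = \sum_{i\in\mathcal{F}} [\boldsymbol{\varphi}^{(t)}]_i\mathbf{a}_i$, which is already the structural form appearing on the right-hand side of (\ref{orthogonal}), up to the scalar factor that I still need to evaluate.

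The crux is then to evaluate that scalar factor. The numerator expands as $(\boldsymbol{\varphi}^{(t)})^{\rm{H}}\mathbf{r}^{(t)} = \sum_{i\in\mathcal{F}} [\boldsymbol{\varphi}^{(t)}]_i^{*}\, r_i^{(t)} = \sum_{i\in\mathcal{F}} |r_i^{(t)}|^2/\|\mathbf{a}_i\|_2^2$. For the denominator I would expand $\|\mathbf{A}^{\rm{H}}\boldsymbol{\varphi}^{(t)}\|_2^2$ as a double sum over $\mathcal{F}\times\mathcal{F}$ and use $\mathbf{a}_i^{\rm{H}}\mathbf{a}_j=0$ to annihilate every off-diagonal contribution, leaving $\sum_{i\in\mathcal{F}}|[\boldsymbol{\varphi}^{(t)}]_i|^2\|\mathbf{a}_i\|_2^2 = \sum_{i\in\mathcal{F}}|r_i^{(t)}|^2/\|\mathbf{a}_i\|_2^2$. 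The two sums coincide, so the scalar ratio in (\ref{update4}) equals unity and the update reduces exactly to (\ref{orthogonal}).

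I do not expect a serious obstacle here; the argument is essentially careful bookkeeping. The only point deserving attention is the Hermitian conjugate in the inner product $(\boldsymbol{\varphi}^{(t)})^{\rm{H}}\mathbf{r}^{(t)}$, which is what pairs $[\boldsymbol{\varphi}^{(t)}]_i^{*}$ with $r_i^{(t)}$ to produce a real, nonnegative $|r_i^{(t)}|^2/\|\mathbf{a}_i\|_2^2$; once that convention is tracked, orthogonality forces numerator and denominator to agree and the result follows.
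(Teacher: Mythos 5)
Your proposal is correct and follows essentially the same route as the paper's Appendix D: restrict $\boldsymbol{\varphi}^{(t)}$ to the rows indexed by $\mathcal{F}$, use pairwise orthogonality to collapse $\|\mathbf{A}^{\rm{H}}\boldsymbol{\varphi}^{(t)}\|_2^2$ to a diagonal sum, and observe that it coincides with the numerator $(\boldsymbol{\varphi}^{(t)})^{\rm{H}}\mathbf{r}^{(t)}$, so the scalar factor in the AHK update equals one. If anything, your explicit tracking of the conjugate, which makes both numerator and denominator equal to $\sum_{i\in\mathcal{F}}|r_i^{(t)}|^2/\|\mathbf{a}_i\|_2^2$, is slightly more careful than the paper's own bookkeeping, which drops the conjugate but arrives at the same cancellation.
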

\begin{proof}
	See Appendix D.
\end{proof}

 According to (\ref{orthogonal}), the vector $  [{\boldsymbol{\varphi}}^{(t)}]_i\mathbf{a}_i, \forall i \in \mathcal{F}$, can be independently calculated for each hyperplane and then summed to update the next iteration solution $\mathbf{x}^{(t+1)}$. When updating the solution based on the orthogonal set $\mathcal{F}$, the computation of the relevant parameters can be divided into multiple subtasks. Each subtask is then executed in parallel on the multicore CPU at the BS and therefore greatly reduces the running time.
 
Next, we focus on how to accurately construct the orthogonal set $\mathcal{F}$. Specifically, we propose a graph theory-based orthogonal set generation algorithm.
We first construct an undirected graph $G=(V,E)$, where $V=\{v_1,\cdots,v_K\}$ and $E=\{(v_i,v_j)|\ v_i,v_j\in V\}$ denote the set of vertex corresponding with $K$ users and the set of edges, respectively. If $(v_i,v_j) \in E$, then there exists an edge between the vertices $v_i$ and $v_j$. 
By utilizing the VR information of different users, then we establish the edge between two users based on their VR overlap situations. Specifically, if ${|\mathcal{M}_i \cap \mathcal{M}_j |} > 0$, then we establish an edge $(v_i,v_j) \in E$ in the graph. 

After constructing the VR-based undirected graph, the orthogonal set $\mathcal{F}$ is regarded as the independent set in the undirected graph, which ensures that there are no edges between any two vertices in this set. To enlarge the impact of the orthogonal set on improving algorithm performance, our goal is to find as many as possible vertices in the VR-based graph $G=(V,E)$ to form a maximum independent set. Thus, we formulate following maximum independent set problem
\begin{subequations} \label{max_set}
	\begin{align}
		\mathop {\max } \quad
		& |\mathcal{F}|	\\
		\textrm{s.t.}\qquad
		& \mathcal{F} \subseteq V, \\ 
		& (v_i,v_j) \notin E, \forall v_i,v_j \in \mathcal{F}.
	\end{align}
\end{subequations}

Problem (\ref{max_set}) is a classic problem in graph theory, which can be effectively addressed with a low-complexity algorithm with linear complexity $\mathcal{O}(K)$ \cite{10.1145/3035918.3035939}. After constructing the orthogonal set $\mathcal{F}$, the remaining users can naturally be grouped into $\mathcal{V}$. Thus, in the VR-OAHK algorithm, the update criteria at each iteration is
\begin{align}
	&\mathbf{x}^{(t+\frac{1}{2})}=\mathbf{x}^{(t)}+\frac{({\boldsymbol{\varphi}^{\rm{or}}}^{(t)})^{\rm{H}}{\mathbf{r}^{\rm{or}}}^{(t)}}{\|\mathbf{A}^{\rm{H}}{\boldsymbol{\varphi}^{\rm{or}}}^{(t)}\|^2_2}\mathbf{A}^{\rm{H}}{\boldsymbol{\varphi}^{\rm{or}}}^{(t)}, \label{updateor}\\
	&\label{updatenor}	\mathbf{x}^{(t+1)}=\mathbf{x}^{(t+\frac{1}{2})}+\frac{({\boldsymbol{\varphi}^{\rm{nor}}}^{(t)})^{\rm{H}}{\mathbf{r}^{\rm{nor}}}^{(t)}}{\|\mathbf{A}^{\rm{H}}{\boldsymbol{\varphi}^{\rm{nor}}}^{(t)}\|^2_2}\mathbf{A}^{\rm{H}}{\boldsymbol{\varphi}^{\rm{nor}}}^{(t)},
\end{align}
where the aggregation coefficient and residual of the orthogonal set and non-orthogonal set are respectively defined as
\begin{align}
	[{\mathbf{r}^{\rm{or}}}^{(t)}]_{\mathcal{F}}&=\mathbf{b}_{\mathcal{F}}-\mathbf{A}_{[\mathcal{F},:]}\mathbf{x}^{(t)}, \\ \label{or}
	[{\boldsymbol{\varphi}^{\rm{or}}}^{(t)}]_{\mathcal{F}}&=\frac{[{\mathbf{r}^{\rm{or}}}^{(t)}]_i}{\|\mathbf{a}_i^{\rm{H}}\|^2_2}, \forall i \in \mathcal{F}, \\
	[{\mathbf{r}^{\rm{nor}}}^{(t)}]_{\mathcal{V}}&=\mathbf{b}_{\mathcal{V}}-\mathbf{A}_{[\mathcal{V},:]}\mathbf{x}^{(t+\frac{1}{2})}, \\ \label{nor}
	[{\boldsymbol{\varphi}^{\rm{nor}}}^{(t)}]_{\mathcal{V}}&=\frac{[{\mathbf{r}^{\rm{nor}}}^{(t)}]_i}{\|\mathbf{a}_i^{\rm{H}}\|^2_2}, \forall i \in \mathcal{V} .
\end{align}
Based on the above discussions, the detailed procedure is outlined in Algorithm \ref{iteda4}.
\vspace{-0.3cm}
\subsection{Convergence Analysis}
\begin{lemma}
	In the VR-OAHK algorithm, given a system of linear equations $\mathbf{A}\mathbf{x}=\mathbf{b}$, where $\mathbf{A}\in \mathbb{C}^{M \times N}$, $\mathbf{x} \in \mathbb{C}^{N \times 1}$, in each iteration, we have the following relationship:
	\begin{align}\label{lemma5}
		&(\mathbf{x}^{(t+\frac{1}{2})}-\mathbf{x}^{*})^{\rm{H}}(\mathbf{x}^{(t+1)}-\mathbf{x}^{(t+\frac{1}{2})})=0 , \\
		&(\mathbf{x}^{(t)}-\mathbf{x}^{*})^{\rm{H}}(\mathbf{x}^{(t+\frac{1}{2})}-\mathbf{x}^{(t)})=0 ,
	\end{align}
	where $\mathbf{x}^{*}$ denotes the optimal solution of the system.
\end{lemma}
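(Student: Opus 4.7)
The plan is to mirror the proof of Lemma 2 while extending it to the two sub-step aggregation projections used by VR-OAHK. The key geometric observation is that each update in (\ref{updateor}) and (\ref{updatenor}) is an orthogonal projection of the current iterate onto an aggregation hyperplane, and since $\mathbf{x}^{*}$ satisfies the full system $\mathbf{A}\mathbf{x}^{*}=\mathbf{b}$, it automatically lies on every aggregation hyperplane formed as a linear combination of the rows of $\mathbf{A}$. This is the central fact I intend to exploit.

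First, I would read off from (\ref{updateor}) and (\ref{updatenor}) that each increment is a scalar multiple of the normal to its aggregation hyperplane: $\mathbf{x}^{(t+\frac{1}{2})}-\mathbf{x}^{(t)}$ is a scalar multiple of $\mathbf{A}^{\rm{H}}{\boldsymbol{\varphi}^{\rm{or}}}^{(t)}$, and $\mathbf{x}^{(t+1)}-\mathbf{x}^{(t+\frac{1}{2})}$ is a scalar multiple of $\mathbf{A}^{\rm{H}}{\boldsymbol{\varphi}^{\rm{nor}}}^{(t)}$. Each required orthogonality is thus an inner product against this normal direction, so the statement collapses to showing that a linear functional of the form $\boldsymbol{\varphi}^{\rm{H}}\mathbf{A}$ applied to the relevant difference vector vanishes.

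Next, I would invoke the defining property of each projection: by construction of (\ref{updateor}), we have $({\boldsymbol{\varphi}^{\rm{or}}}^{(t)})^{\rm{H}}\mathbf{A}\mathbf{x}^{(t+\frac{1}{2})}=({\boldsymbol{\varphi}^{\rm{or}}}^{(t)})^{\rm{H}}\mathbf{b}$, and analogously $({\boldsymbol{\varphi}^{\rm{nor}}}^{(t)})^{\rm{H}}\mathbf{A}\mathbf{x}^{(t+1)}=({\boldsymbol{\varphi}^{\rm{nor}}}^{(t)})^{\rm{H}}\mathbf{b}$ from (\ref{updatenor}). Combined with $\mathbf{A}\mathbf{x}^{*}=\mathbf{b}$, the projected iterate and the optimum satisfy the same aggregation equation, so algebraic expansion mirroring the final display in the proof of Lemma 2 yields the desired inner-product identity for each sub-step. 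For the identity involving $\mathbf{x}^{(t+\frac{1}{2})}$ and $\mathbf{x}^{(t+1)}$, Theorem 4 plays a reinforcing role: it guarantees $\mathbf{A}_{[\mathcal{F},:]}\mathbf{x}^{(t+\frac{1}{2})}=\mathbf{b}_{\mathcal{F}}$, so the residual at the half-step is supported only on $\mathcal{V}$, which is exactly where ${\boldsymbol{\varphi}^{\rm{nor}}}^{(t)}$ is non-zero.

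The main obstacle I expect is the bookkeeping around the restricted supports of ${\boldsymbol{\varphi}^{\rm{or}}}^{(t)}$ on $\mathcal{F}$ and ${\boldsymbol{\varphi}^{\rm{nor}}}^{(t)}$ on $\mathcal{V}$. Each expansion of $\boldsymbol{\varphi}^{\rm{H}}\mathbf{A}\mathbf{x}$ reshuffles residuals, aggregation coefficients, and sub-matrices $\mathbf{A}_{[\mathcal{F},:]}$ or $\mathbf{A}_{[\mathcal{V},:]}$, and one must verify that the cross terms between $\mathcal{F}$ and $\mathcal{V}$ do not contribute and that the residual on $\mathcal{F}$ at the intermediate step has in fact been zeroed out before the non-orthogonal sub-step begins. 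Once the supports are handled cleanly, the remainder is a direct algebraic analogue of Lemma 2 applied sequentially to the two sub-steps.
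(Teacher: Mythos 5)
Your core argument is essentially the paper's own: write each sub-step as a scalar multiple of its aggregation normal $\mathbf{A}^{\rm{H}}\boldsymbol{\varphi}$, note that the updated iterate satisfies the aggregated equation $\boldsymbol{\varphi}^{\rm{H}}\mathbf{A}\mathbf{x}=\boldsymbol{\varphi}^{\rm{H}}\mathbf{b}$ while $\mathbf{x}^{*}$ satisfies it too (since $\mathbf{A}\mathbf{x}^{*}=\mathbf{b}$), and conclude by the same one-line expansion used for Lemma 2; the paper's proof of this lemma is exactly that expansion, written algebraically rather than as hyperplane membership. Two bookkeeping remarks. First, what your facts $({\boldsymbol{\varphi}^{\rm{or}}}^{(t)})^{\rm{H}}\mathbf{A}\mathbf{x}^{(t+\frac{1}{2})}=({\boldsymbol{\varphi}^{\rm{or}}}^{(t)})^{\rm{H}}\mathbf{b}$ and $({\boldsymbol{\varphi}^{\rm{nor}}}^{(t)})^{\rm{H}}\mathbf{A}\mathbf{x}^{(t+1)}=({\boldsymbol{\varphi}^{\rm{nor}}}^{(t)})^{\rm{H}}\mathbf{b}$ deliver are the post-update orthogonalities $(\mathbf{x}^{(t+\frac{1}{2})}-\mathbf{x}^{*})^{\rm{H}}(\mathbf{x}^{(t+\frac{1}{2})}-\mathbf{x}^{(t)})=0$ and $(\mathbf{x}^{(t+1)}-\mathbf{x}^{*})^{\rm{H}}(\mathbf{x}^{(t+1)}-\mathbf{x}^{(t+\frac{1}{2})})=0$; these are precisely what the Pythagorean identities in (\ref{gougu2}) need and what the paper's proof actually establishes, whereas the pairing as printed in the lemma (previous error against the new step) equals $-|\boldsymbol{\varphi}^{\rm{H}}\mathbf{r}|^2/\|\mathbf{A}^{\rm{H}}\boldsymbol{\varphi}\|_2^2$ and therefore cannot vanish unless the step itself is zero — so your reading is the right one, but state explicitly which iterate appears in each inner product. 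Second, the appeal to Theorem 4 is superfluous and its rationale does not help: the vanishing of the residual on $\mathcal{F}$ at the half step says nothing about the inner product with ${\boldsymbol{\varphi}^{\rm{nor}}}^{(t)}$, which is supported on $\mathcal{V}$; the projection property of the second sub-step alone gives the required identity, exactly as in the paper.
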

\begin{proof} 
	 According to (\ref{updateor}) and (\ref{updatenor}), we have
	 \vspace{-0.3cm}
\begin{align}
	\mathbf{x}^{(t+1)}-\mathbf{x}^{(t+\frac{1}{2})}&=\frac{({\boldsymbol{\varphi}^{\rm{nor}}}^{(t)})^{\rm{H}}{\mathbf{r}^{\rm{nor}}}^{(t)}}{\|\mathbf{A}^{\rm{H}}{\boldsymbol{\varphi}^{\rm{nor}}}^{(t)}\|^2_2}\mathbf{A}^{\rm{H}}{\boldsymbol{\varphi}^{\rm{nor}}}^{(t)}, \label{lemm31}  \\
	\mathbf{x}^{(t+\frac{1}{2})}-\mathbf{x}^{(t)}&=\frac{({\boldsymbol{\varphi}^{\rm{or}}}^{(t)})^{\rm{H}}{\mathbf{r}^{\rm{or}}}^{(t)}}{\|\mathbf{A}^{\rm{H}}{\boldsymbol{\varphi}^{\rm{or}}}^{(t)}\|^2_2}\mathbf{A}^{\rm{H}}{\boldsymbol{\varphi}^{\rm{or}}}^{(t)}.\label{lemm32}
\end{align}
Thus, to demonstrate (\ref{lemma5}), we need to show that $\mathbf{A}^{\rm{H}}{\boldsymbol{\varphi}^{\rm{nor}}}^{(t)}(\mathbf{x}^{(t+\frac{1}{2})}-\mathbf{x}^{(*)})=0$, which is proved as
\begin{align} \label{prove}
	&	(\mathbf{A}^{\rm{H}}{\boldsymbol{\varphi}^{\rm{nor}}}^{(t)})^{\rm{H}}(\mathbf{x}^{(t+\frac{1}{2})}-\mathbf{x}^{(*)})\nonumber \\& = ({\boldsymbol{\varphi}^{\rm{nor}}}^{(t)})^{\rm{H}}\mathbf{A}(\mathbf{x}^{(t)}-\mathbf{x}^*+\frac{({\boldsymbol{\varphi}^{\rm{nor}}}^{(t)})^{\rm{H}}\mathbf{r}^{(t)}}{\|\mathbf{A}^{\rm{H}}{\boldsymbol{\varphi}^{\rm{nor}}}^{(t)}\|^2_2}\mathbf{A}^{\rm{H}}{\boldsymbol{\varphi}^{\rm{nor}}}^{(t)}) \nonumber \\
	&=({\boldsymbol{\varphi}^{\rm{nor}}}^{(t)})^{\rm{H}}\mathbf{A}\mathbf{x}^{(t)}-({\boldsymbol{\varphi}^{\rm{nor}}}^{(t)})^{\rm{H}}\mathbf{A}\mathbf{x}^*+{({\boldsymbol{\varphi}^{\rm{nor}}}^{(t)})^{\rm{H}}\mathbf{r}^{(t)}} \nonumber 
	\\&=({\boldsymbol{\varphi}^{\rm{nor}}}^{(t)})^{\rm{H}}(\mathbf{A}\mathbf{x}^{(t)}-\mathbf{b})+{({\boldsymbol{\varphi}^{\rm{nor}}}^{(t)})^{\rm{H}}\mathbf{r}^{(t)}}=0.
\end{align}
Following similar steps with (\ref{prove}) , (\ref{lemm32}) can be proved, completing the proof.
\end{proof}

Based on Lemma 3, we obtain the following relationship by using the Pythagorean theorem
\begin{align}\label{gougu2}
	\|\mathbf{x}^{(t+1)}-\mathbf{x}^{*}\|^2_2 &=\|\mathbf{x}^{(t+\frac{1}{2})}-\mathbf{x}^{*}\|^2_2-\|\mathbf{x}^{(t+1)}-\mathbf{x}^{(t+\frac{1}{2})}\|^2_2,  \\ \|\mathbf{x}^{(t+\frac{1}{2})}-\mathbf{x}^{*}\|^2_2 &=\|\mathbf{x}^{(t)}-\mathbf{x}^{*}\|^2_2-\|\mathbf{x}^{(t+\frac{1}{2})}-\mathbf{x}^{(t)}\|^2_2.
\end{align}
	\begin{figure*}[hb] 
	\centering 
	\hrulefill 
	\vspace*{1pt} 
	\newcounter{TempEqCnt} 
	\setcounter{TempEqCnt}{\value{equation}} 
	\setcounter{equation}{50} %
	\begin{align}\label{line}
		&\|\mathbf{e}^{(t+1)}\|^2_2=\|\mathbf{e}^{(t+\frac{1}{2})}\|^2_2-\frac{(\mathbf{e}^{(t+\frac{1}{2})})^{\rm{H}}\mathbf{A}^{\rm{H}}\boldsymbol{\Phi}\mathbf{A}\mathbf{e}^{(t+\frac{1}{2})}\mathbf{A}\mathbf{e}^{(t+\frac{1}{2})}(\mathbf{e}^{(t+\frac{1}{2})})^{\rm{H}}\mathbf{A}^{\rm{H}}\boldsymbol{\Phi}\mathbf{A}\mathbf{e}^{(t+\frac{1}{2})}\mathbf{A}\mathbf{e}^{(t+\frac{1}{2})}}{(\mathbf{e}^{(t+\frac{1}{2})})^{\rm{H}}\mathbf{A}^{\rm{H}}\boldsymbol{\Phi}^{\rm{T}}\mathbf{A}\mathbf{A}^{\rm{H}}\boldsymbol{\Phi}^{\rm{T}}\mathbf{A}\mathbf{e}^{(t+\frac{1}{2})}}  \nonumber \\ &=(1-\frac{((\mathbf{e}^{(t+\frac{1}{2})})^{\rm{H}}\mathbf{A}^{\rm{H}}\boldsymbol{\Phi}\mathbf{A}\mathbf{e}^{(t+\frac{1}{2})}\mathbf{A}\mathbf{e}^{(t+\frac{1}{2})})^2}{(\mathbf{e}^{(t+\frac{1}{2})})^{\rm{H}}(\mathbf{A}^{\rm{H}}\boldsymbol{\Phi}^{\rm{T}}\mathbf{A})^2\mathbf{e}^{(t+\frac{1}{2})}(\mathbf{e}^{(t+\frac{1}{2})})^{\rm{H}}\mathbf{e}^{(t+\frac{1}{2})}})\|\mathbf{e}^{(t+\frac{1}{2})}\|^2_2 \nonumber \\
		&=(1-\frac{\boldsymbol{\vartheta}\mathbf{P}\boldsymbol{\vartheta}^{\rm{H}}}{\boldsymbol{\vartheta}\boldsymbol{\vartheta}^{\rm{H}}})\|\mathbf{e}^{(t+\frac{1}{2})}\|^2_2,
	\end{align}
\end{figure*}
\begin{table*}[ht] 
	\centering
	\caption{Computational Complexity Comparison}
	\begin{tabular}{|c|c|}
		\hline
		\textbf{Scheme} & \textbf{Computational Complexity {[}FLOPS{]}} \\ \hline
		RZF             &  $8K^3+9K^2+12N_tK^2-3K$                                             \\ \hline
		URK             & $8N_tK^2+4N_tK+T^{\rm{URK}}(16N_t-4)$                  \\ \hline
		SWOR-ERK        &  $8N_tK^2+4N_tK+K-1+T^{\rm{ERK}}(16N_t+K+8)$                                              \\ \hline
		GK        &  $8N_tK^2+4N_tK-K+T^{\rm{GK}}(8N_t(K+1)+K-5)$                                              \\ \hline
		VR-OGRK          &  $8N_tK(|{\mathcal{Q}}|+\frac{1}{2})-K+T^{\rm{OGRK}}(8|{\mathcal{M}}|(|{\mathcal{X}}|+1)+K-5)$                                             \\ \hline
		VR-OAHK          &  $8N_tK(|{\mathcal{Q}}|+\frac{1}{2})+T^{\rm{OAHK}}(8|{\mathcal{M}}|(2|{\mathcal{V}}|+|{\mathcal{F}}|+2)+24|\mathcal{V}|+14N_t+5)$                                             \\ \hline
	\end{tabular}\label{table1}
\end{table*}
\begin{theorem}
	The VR-OAHK algorithm converges as
	\newcounter{TempEqCnt1} 
	\setcounter{TempEqCnt1}{\value{equation}} 
	\setcounter{equation}{46} %
		\begin{align}
		\|\mathbf{x}^{(t+1)}-\mathbf{x}^{*}\|^2_2 \le \alpha^{2t}\|\mathbf{x}^{(0)}-\mathbf{x}^{*}\|^2_2,
	\end{align}
	with global convergence rate 
	\begin{align}
		\alpha=(1-\frac{\boldsymbol{\vartheta}\mathbf{P}\boldsymbol{\vartheta}^{\rm{H}}}{\boldsymbol{\vartheta}\boldsymbol{\vartheta}^{\rm{H}}}) < 1,
	\end{align}
	where $\boldsymbol{\vartheta}=\mathbf{A}^{\rm{H}}\boldsymbol{\Phi}^{\rm{T}}\mathbf{A}\mathbf{e}^{(t+\frac{1}{2})}$, $\mathbf{P}=\frac{\mathbf{e}^{(t+\frac{1}{2})}(\mathbf{e}^{(t+\frac{1}{2})})^{\rm{H}}}{(\mathbf{e}^{(t+\frac{1}{2})})^{\rm{H}}\mathbf{e}^{(t+\frac{1}{2})}}$, $\mathbf{e}^{(t+\frac{1}{2})}=\mathbf{x}^{(t+\frac{1}{2})}-\mathbf{x}^{*}$.
\end{theorem}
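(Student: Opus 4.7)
The plan is to follow the same two-step structure used in the proof of Theorem 3, but now applied to the two half-steps of the VR-OAHK update. Write $\mathbf{e}^{(t)}=\mathbf{x}^{(t)}-\mathbf{x}^{*}$, $\mathbf{e}^{(t+\frac{1}{2})}=\mathbf{x}^{(t+\frac{1}{2})}-\mathbf{x}^{*}$, and $\mathbf{e}^{(t+1)}=\mathbf{x}^{(t+1)}-\mathbf{x}^{*}$. First I would invoke Lemma 3 together with the Pythagorean relations in (\ref{gougu2}), which immediately decompose $\|\mathbf{e}^{(t+1)}\|_2^2$ and $\|\mathbf{e}^{(t+\frac{1}{2})}\|_2^2$ into an error term minus a positive projection contribution. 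This means that, as in the proof of Theorem 3, the convergence analysis reduces to quantifying the two subtracted norms $\|\mathbf{x}^{(t+\frac{1}{2})}-\mathbf{x}^{(t)}\|_2^2$ and $\|\mathbf{x}^{(t+1)}-\mathbf{x}^{(t+\frac{1}{2})}\|_2^2$.

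Next, I would substitute the explicit AHK updates (\ref{lemm31})--(\ref{lemm32}) into these two norms. Using the identity $({\boldsymbol{\varphi}^{\rm{nor}}}^{(t)})^{\rm{H}}\mathbf{r}^{(t)} = -({\boldsymbol{\varphi}^{\rm{nor}}}^{(t)})^{\rm{H}}\mathbf{A}\mathbf{e}^{(t+\frac{1}{2})}$ (and its analogue on the orthogonal step), together with the coefficient definition $\boldsymbol{\varphi}=\mathbf{A}\mathbf{e}/\|\mathbf{a}\|^2$ written compactly through a diagonal weighting $\boldsymbol{\Phi}$, the projection term collapses exactly into the Rayleigh-quotient form displayed in (\ref{line}). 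This delivers the one-step contraction
\[
\|\mathbf{e}^{(t+1)}\|_2^2 = \Bigl(1-\frac{\boldsymbol{\vartheta}\mathbf{P}\boldsymbol{\vartheta}^{\rm{H}}}{\boldsymbol{\vartheta}\boldsymbol{\vartheta}^{\rm{H}}}\Bigr)\|\mathbf{e}^{(t+\frac{1}{2})}\|_2^2 = \alpha\,\|\mathbf{e}^{(t+\frac{1}{2})}\|_2^2.
\]
Repeating the identical manipulation for the orthogonal half-step — now with $\boldsymbol{\varphi}^{\rm{or}}$ replacing $\boldsymbol{\varphi}^{\rm{nor}}$ and $\mathbf{e}^{(t)}$ replacing $\mathbf{e}^{(t+\frac{1}{2})}$ — yields the twin bound $\|\mathbf{e}^{(t+\frac{1}{2})}\|_2^2 \le \alpha\,\|\mathbf{e}^{(t)}\|_2^2$. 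Composing the two half-step contractions gives $\|\mathbf{e}^{(t+1)}\|_2^2 \le \alpha^{2}\|\mathbf{e}^{(t)}\|_2^2$, and a straightforward induction on $t$ then produces the claimed $\alpha^{2t}$ decay.

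The one genuinely delicate point, which I expect to be the main obstacle, is verifying the strict inequality $\alpha<1$. Here $\mathbf{P}$ is the rank-one orthogonal projector onto $\mathrm{span}(\mathbf{e}^{(t+\frac{1}{2})})$, so the Rayleigh quotient $\boldsymbol{\vartheta}\mathbf{P}\boldsymbol{\vartheta}^{\rm{H}}/\boldsymbol{\vartheta}\boldsymbol{\vartheta}^{\rm{H}}$ automatically lies in $[0,1]$; the bound $\le 1$ is trivial, but the strict lower bound $>0$ requires that $\boldsymbol{\vartheta}=\mathbf{A}^{\rm{H}}\boldsymbol{\Phi}^{\rm{T}}\mathbf{A}\mathbf{e}^{(t+\frac{1}{2})}$ is not orthogonal to $\mathbf{e}^{(t+\frac{1}{2})}$. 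I would establish this by noting that $\mathbf{A}^{\rm{H}}\boldsymbol{\Phi}^{\rm{T}}\mathbf{A}$ is Hermitian positive semidefinite with $\mathrm{Range}(\mathbf{A}^{\rm{H}}\boldsymbol{\Phi}^{\rm{T}}\mathbf{A})=\mathrm{Range}(\mathbf{A}^{\rm{H}})$, and that, as in the RK analysis, $\mathbf{e}^{(t+\frac{1}{2})}\in\mathrm{Range}(\mathbf{A}^{\rm{H}})$ throughout the iteration (because the initial error's component in $\mathcal{K}(\mathbf{A})$ is preserved and can be absorbed into $\mathbf{x}^{*}$). Hence $(\mathbf{e}^{(t+\frac{1}{2})})^{\rm{H}}\boldsymbol{\vartheta}>0$ whenever $\mathbf{e}^{(t+\frac{1}{2})}\neq\mathbf{0}$, giving $\alpha<1$ and completing the argument.
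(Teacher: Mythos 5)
Your proposal follows essentially the same route as the paper's proof: Lemma 3 with the Pythagorean identities (\ref{gougu2}), substitution of ${\boldsymbol{\varphi}^{\rm{nor}}}^{(t)}=-\boldsymbol{\Phi}\mathbf{A}\mathbf{e}^{(t+\frac{1}{2})}$ to collapse the subtracted projection term into the Rayleigh-quotient contraction of (\ref{line}), the analogous bound for the orthogonal half-step, and composition plus induction to obtain the $\alpha^{2t}$ decay. Your closing argument that $\mathbf{e}^{(t+\frac{1}{2})}$ remains in $\mathcal{R}(\mathbf{A}^{\rm{H}})$, so that $(\mathbf{e}^{(t+\frac{1}{2})})^{\rm{H}}\boldsymbol{\vartheta}>0$ and hence $\alpha<1$ strictly, is actually a more careful treatment of the one step the paper handles only by appealing to the Rayleigh quotient theorem for the rank-one projector $\mathbf{P}$.
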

\begin{proof}
		To simplify the analysis, we define $\mathbf{x}^{(t)}-\mathbf{x}^{*}=\mathbf{e}^{(t)}$. Then, according to (\ref{gougu2}), we have the following relationship
		\begin{align}\label{new1}
		&\|\mathbf{e}^{(t+1)}\|^2_2 =\|\mathbf{e}^{(t+\frac{1}{2})}\|^2_2-\|\mathbf{x}^{(t+1)}-\mathbf{x}^{(t+\frac{1}{2})}\|^2_2\nonumber \\
		&=\|\mathbf{e}^{(t+\frac{1}{2})}\|^2_2-\frac{(\mathbf{e}^{(t+\frac{1}{2})})^{\rm{H}}\mathbf{A}^{\rm{H}}{\boldsymbol{\varphi}^{\rm{nor}}}^{(t)}({\boldsymbol{\varphi}^{\rm{nor}}}^{(t)})^{\rm{H}}\mathbf{A}\mathbf{e}^{(t+\frac{1}{2})}}{\|\mathbf{A}^{\rm{H}}{\boldsymbol{\varphi}^{\rm{nor}}}^{(t)}\|^2_2}.
		\end{align}
	Define $\boldsymbol{\Phi}={\rm{diag}}(\frac{1}{\|\mathbf{a}^{\rm{H}}_1\|_2^2},\cdots,\frac{1}{\|\mathbf{a}^{\rm{H}}_{M}\|_2^2}) \in \mathbb{R}^{M \times M}$. Then, (\ref{nor}) can be rewritten as
	\begin{align}\label{new2}
		{\boldsymbol{\varphi}^{\rm{nor}}}^{(t)}=-\boldsymbol{\Phi}\mathbf{A}\mathbf{e}^{(t+\frac{1}{2})}.
	\end{align}
	
	By substituting (\ref{new2}) into (\ref{new1}), $ \|\mathbf{e}^{(t+1)}\|^2_2$ is reformulated as (\ref{line}), shown at the bottom of this page,
where the parameters are defined as
\vspace{-0.3cm}
	\newcounter{TempEqCnt3} 
\setcounter{TempEqCnt3}{\value{equation}} 
\setcounter{equation}{51} %
	\begin{align}
		\boldsymbol{\vartheta}&=\mathbf{A}^{\rm{H}}\boldsymbol{\Phi}^{\rm{T}}\mathbf{A}\mathbf{e}^{(t+\frac{1}{2})}, \\
		\mathbf{P}&=\frac{\mathbf{e}^{(t+\frac{1}{2})}(\mathbf{e}^{(t+\frac{1}{2})})^{\rm{H}}}{(\mathbf{e}^{(t+\frac{1}{2})})^{\rm{H}}\mathbf{e}^{(t+\frac{1}{2})}},
	\end{align}
	where $\mathbf{P}$ is a rank-one Hermitian matrix, the eigenvalue of $\mathbf{P}$ is either 0 or 1. Then, according to Rayleigh quotient theorem, we can obtain the following relationship
	\begin{align}
		0 < \frac{\boldsymbol{\vartheta}\mathbf{P}\boldsymbol{\vartheta}^{\rm{H}}}{\boldsymbol{\vartheta}\boldsymbol{\vartheta}^{\rm{H}}} \le 1,
	\end{align}
and the convergence rate	$\alpha=(1-\frac{\boldsymbol{\vartheta}\mathbf{P}\boldsymbol{\vartheta}^{\rm{H}}}{\boldsymbol{\vartheta}\boldsymbol{\vartheta}^{\rm{H}}}) \in [0,1)$. The convergence analysis of orthogonal set is similar to the non-stationary set. Therefore, we have
	\begin{align}
		\|\mathbf{x}^{(t+1)}-\mathbf{x}^{*}\|^2_2 &\le \alpha \|\mathbf{x}^{(t+\frac{1}{2})}-\mathbf{x}^{*}\|^2_2 \nonumber \\ &\le \alpha^2\|\mathbf{x}^{(t)}-\mathbf{x}^{*}\|^2_2  \le
		\alpha^{2t}\|\mathbf{x}^{(0)}-\mathbf{x}^{*}\|^2_2,
	\end{align}
		completing the proof.
\end{proof}
\section{Complexity Analysis}
The detailed computational complexity, expressed in floating point operations (FLOPS) \cite{golub2013matrix}, is presented in Table \ref{table1}. In this table, $T^{\rm{URK}}$, $T^{\rm{ERK}}$, $T^{\rm{GK}}$, $T^{\rm{OGRK}}$, $T^{\rm{OAHK}}$ represent the iteration numbers for the uniform criterion-based RK algorithm (URK) \cite{8425997}, the energy criterion-based RK algorithm with sampling without replacement technology (SWOR-ERK)\cite{10279327}, the GK algorithm \cite{orthogonality}, the VR-OGRK algorithm, and the VR-OAHK algorithm, respectively.  For the sake of analysis, we assume here, $|\Gamma_k| = |\Gamma|$, $|\mathcal{X}_k| = |\mathcal{X}|, \forall k$, and but the true value of $|\Gamma_k|$, $|\mathcal{X}_k|$ will be used in the simulation section. As shown in Table \ref{table1}, for the proposed VR-OGRK algorithm in Algorithm \ref{iteda3}, the computational complexity of calculating the precoding matrix $\mathbf{F}$ reduces from $\mathcal{O}(8N_tK^2)$ to $\mathcal{O}(8N_tK|\mathcal{Q}|)$ and the computational complexity of Step 8 reduces from $\mathcal{O}(8N_tK)$ to $\mathcal{O}(8|\Gamma||\mathcal{X}|)$ at each iteration. Therefore, our algorithm can work effectively in ELAA-supported hotspot scenarios. Furthermore, the main advantage of the two proposed algorithms is the significant reduction in the number of iterations compared to the traditional Kaczmarz-type algorithms, which will be shown in the simulation section.
\section{Simulation Results}
In this section, numerical results are presented to demonstrate the effectiveness of the proposed VR-OGRK algorithm and the VR-OAHK algorithm compared to the traditional Kaczmarz-type algorithm.
\vspace{-0.3cm}
\subsection{Simulation Configuration}
 Without loss of generality, the simulation parameters are assumed to be $N_t=2000$, $f= 100\  \rm{GHz}$, $S=20$, $\epsilon=10^{-6}$, $L_k=5$. There are $K=30$ users randomly distributed within a range of $1$ m to $50$ m from the ELAA. ${\rm{SNR}}_{k}=\frac{\rho\|{{\mathbf{h}}}_{k}\|^2_2}{\sigma^2}$ is defined as the received SNR for user $k$. For simplicity, the SNRs of all users are assumed to be equal due to power control, i.e., ${\rm{SNR}}=0 \ \rm{dB}$. From \cite{9733790} and \cite{9777939}, the VR set $\mathcal{M}_k$ can be modeled as a Bernoulli distribution. Thus, the visibility of the $s$-th subarray for the $k$-th user is modeled as
\begin{equation} 
\left\{
	\begin{aligned}
		& s \in \mathcal{M}_k, {\rm{with\ probability}}\ p,\\
		&s \notin \mathcal{M}_k, {\rm{with\ probability}}\ 1-p,
	\end{aligned}
	\right.
\end{equation}
where $p$ denotes the VR generation probability. From \cite{9733790} and \cite{9777939}, we assume $p=0.35$. Furthermore, the impact of this parameter on the proposed algorithm will be discussed in the simulations.
\vspace{-0.3cm}
\subsection{Algorithm Performance Comparison}
\begin{figure}[t]
	\centering
	\includegraphics[width=0.75\linewidth]{./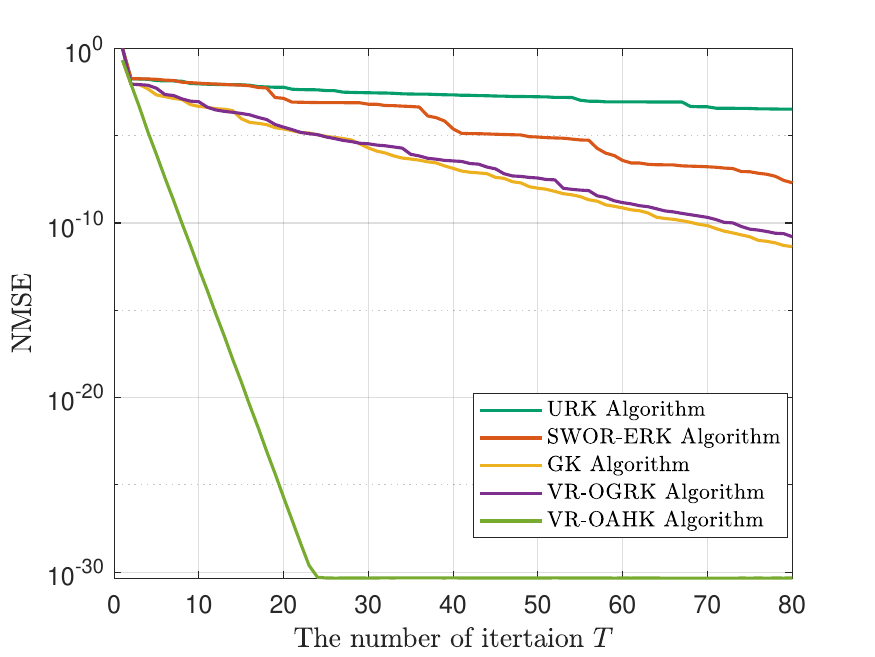}
	\captionsetup{name={Fig.},labelsep=period,singlelinecheck=off}
	\caption{{Convergence behaviors of different algorithms.}}\vspace{-0.55cm}
	\label{fig2}
\end{figure}
\begin{figure}[t]
	\centering
	\includegraphics[width=0.75\linewidth]{./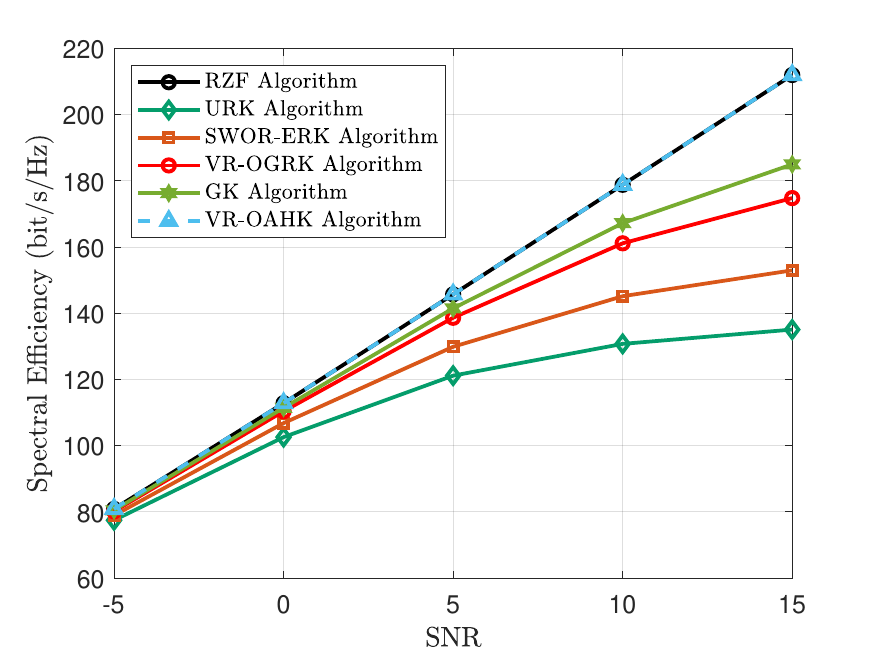}
	\captionsetup{name={Fig.},labelsep=period,singlelinecheck=off}
	\caption{{Spectral efficiency versus $\rm{SNR}$ with $T=15$.}}\vspace{-0.55cm}
	\label{fig3}
\end{figure}
Fig. \ref{fig2} compares the convergence behaviors of different algorithms in terms of NMSE ($ \textrm{NMSE}=\frac{\|\mathbf{F}^{\textrm{RZF}}-{\mathbf{F}}\|_F}{\|\mathbf{F}^{\textrm{RZF}}\|_F}$)\footnote{The NMSE value approaches zero, indicating that the sum rate realized by this scheme is nearly equivalent to that of the RZF scheme.}. As shown in Fig. \ref{fig2}, the proposed VR-OGRK algorithm exhibits a faster convergence rate than both the URK and SWOR-ERK algorithms, while its convergence behavior is similar to that of the GK algorithm. Furthermore, to achieve the NMSE of $10^{-6}$, the proposed VR-OGRK and VR-OAHK algorithms require only $27$ and $5$ iterations, respectively, representing a significant reduction compared to the SWOR-ERK and URK algorithms. For the VR-OGRK algorithm, the improvement of the convergence speed is attributed to its selection strategy, which ensures that the hyperplane with a large residual error is chosen at each iteration. Building on this, the proposed VR-OAHK algorithm further leverages information from multiple hyperplanes rather than a single hyperplane at each iteration, resulting in a faster convergence rate.

In Fig. \ref{fig3}, we investigate the spectral efficiency versus SNR. It is important to note that we assume the iteration number of $T=T^{\rm{URK}}=T^{\rm{ERK}}=T^{\rm{GK}}=T^{\rm{OGRK}}=T^{\rm{OAHK}}=15$. As observed from Fig. \ref{fig3}, the spectral efficiency of the proposed VR-OAHK algorithm is nearly equal to that of the RZF algorithm after a small number of iterations. While the spectral efficiency of the proposed VR-OGRK algorithm is slightly lower than that of the VR-OAHK algorithm, it remains significantly higher than that of the SWOR-ERK and URK algorithms. This is because the two proposed algorithms have faster convergence rate by leveraging the VR features. Furthermore, as SNR increases, traditional Kaczmarz-type algorithms do not yield satisfactory performance. This is because communication scenarios with high SNR require higher accuracy in matrix inversion, and thus the traditional Kaczmarz algorithm requires an extra number of iterations to approximate the matrix inversion. Thus, the slower convergence of the traditional Kaczmarz-type algorithm leads to reduced spectral efficiency at a fixed number of iterations. Moreover, when the number of iterations is unrestricted, all algorithms can achieve similar performance to RZF. However, the slow convergence rate of the traditional Kaczmarz algorithm results in additional computational overhead, which will be discussed later.

\begin{figure}[t]
	\centering
	\includegraphics[width=0.75\linewidth]{./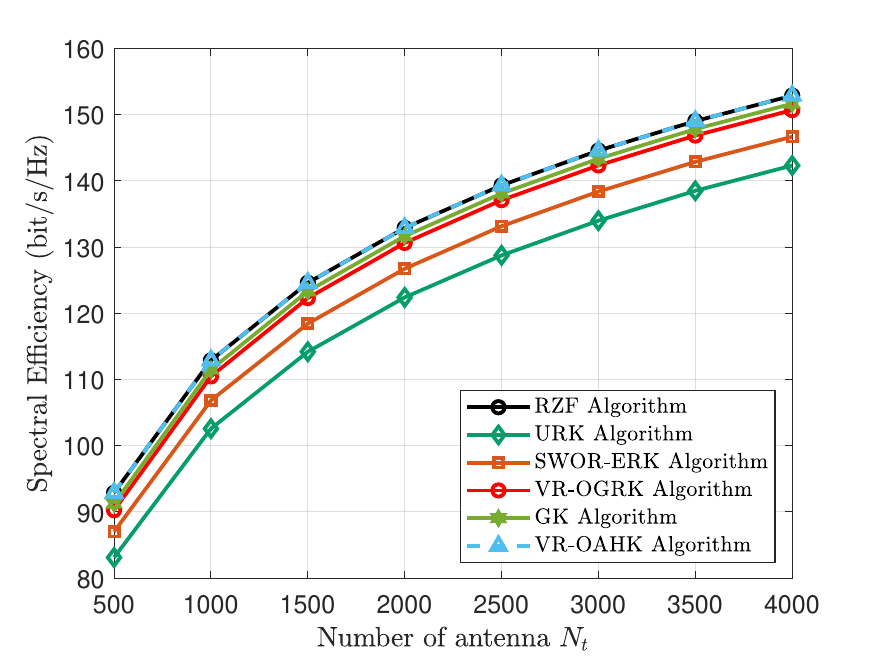}
	\captionsetup{name={Fig.},labelsep=period,singlelinecheck=off}
	\caption{ {Spectral efficiency versus $N_t$ with $T=15$.}}\vspace{-0.5cm}
	\label{fig4}
\end{figure}

\begin{figure}[t]
	\centering
	\includegraphics[width=0.75\linewidth]{./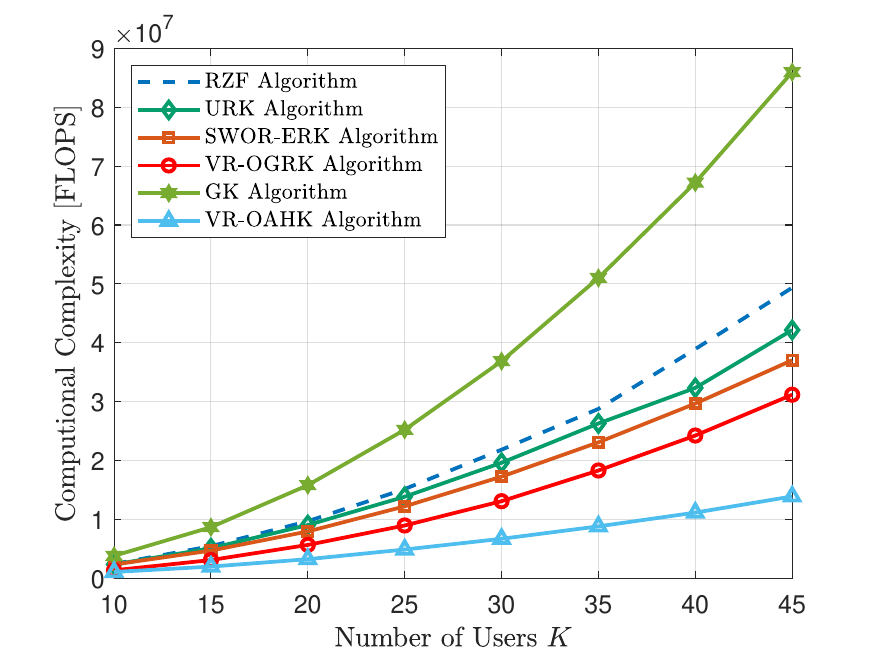}
	\captionsetup{name={Fig.},labelsep=period,singlelinecheck=off}
	\caption{{Computational complexity versus $K$.}}\vspace{-0.55cm}
	\label{fig5}
\end{figure}
\begin{figure}[t]
	\centering
	\includegraphics[width=0.75\linewidth]{./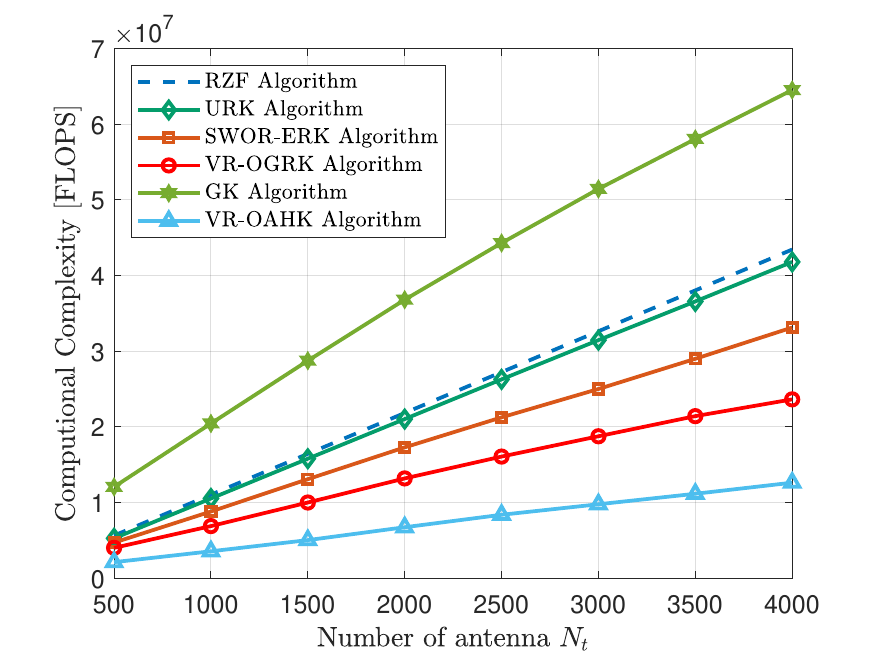}
	\captionsetup{name={Fig.},labelsep=period,singlelinecheck=off}
	\caption{{Computational complexity versus $N_t$.}}\vspace{-0.5cm}
	\label{fig6}
\end{figure}
\begin{figure}[t]
	\centering
	\includegraphics[width=0.75\linewidth]{./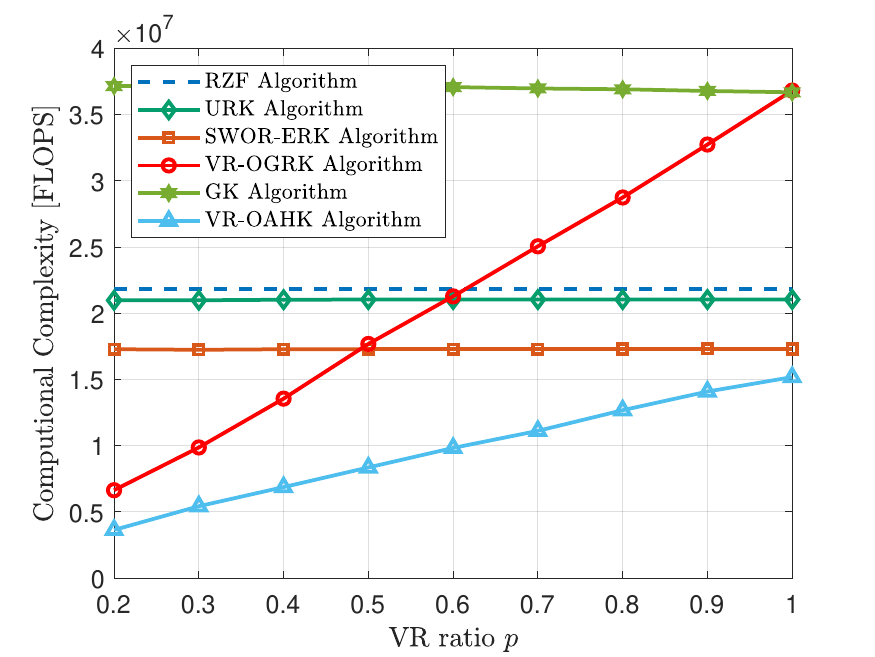}
	\captionsetup{name={Fig.},labelsep=period,singlelinecheck=off}
	\caption{{Computational complexity versus $p$.}}\vspace{-0.55cm}
	\label{fig7}
\end{figure}
Fig. \ref{fig4} illustrates the relationship between the spectral efficiency and the number of antenna $N_t$ with a fixed iteration number of $T=15$ across different algorithms. As illustrated in Fig. \ref{fig4}, the proposed VR-OGRK algorithm and VR-OAHK algorithm can achieve the spectral efficiency comparable to the RZF algorithm. Furthermore, as the number of antennas $N_t$ increases, both proposed algorithms continue to maintain similar performance to the RZF algorithm. This is because both the proposed algorithms converge faster than the SWOR-ERK and URK algorithm. Consequently, with a fixed number of iterations, the two proposed algorithms exhibit a lower NMSE, resulting in higher spectral efficiency. Therefore, even with a larger number of antennas, the two proposed algorithms can achieve the spectral efficiencies close to that of RZF.

In Fig. \ref{fig5}, we compare the computational complexity of different algorithms in terms of $K$. As observed in Fig. \ref{fig5}, the proposed VR-OGRK algorithm and the proposed VR-OAHK algorithm have a lower computational complexity than the traditional Kaczmarz-type algorithms. The complexity of the GK algorithm exceeds that of the RZF scheme, mainly due to the additional residual computation during each iteration. It can be seen that the computational complexity of the proposed VR-OGRK algorithm increases at a slower rate as $K$ rises. This is because the complexity with respect to the number of users in the proposed VR-OGRK algorithm scales as $ \mathcal{O}(8N_tK|{\mathcal{Q}}|)$, while for the URK and SWOR-ERK algorithms, it is on the order of $\mathcal{O}(8N_tK^2)$. Since the growth of $|{\mathcal{Q}}|$ is less than that of $K$, the proposed VR-OGRK algorithm maintains a low computational complexity even with a large total number of users. Additionally, the proposed VR-OAHK algorithm further reduces the computational cost by decreasing the required number of iterations using  information from multiple hyperplanes in each iteration.

Fig. \ref{fig6} examines the computational complexity of different algorithms in relation to $N_t$. It can be observed that the proposed VR-OGRK and VR-OAHK algorithms substantially reduce the computational complexity compared to the other algorithms. Moreover, as $N_t$ increases, the proposed algorithms maintain low computational complexities. This is because the computational complexity of $N_t$ in the VR-OGRK algorithm is $\mathcal{O}(T(8|{{\Gamma}}|(|{\mathcal{X}}|+1)))$, whereas for the URK and SWOR-ERK algorithms, it is $\mathcal{O}(T(16N_t))$. Due to the spatial non-stationarity effect, the increase in $|\Gamma|$ is less than that in $N_t$. Furthermore, by utilizing information from multiple hyperplanes in each iteration, the proposed VR-OAHK algorithm achieves a lower computational complexity than the VR-OGRK algorithm. Thus, as shown in Fig. \ref{fig3}-\ref{fig6}, the two proposed algorithms can achieve satisfactory performance with low complexity in ELAA systems.

In Fig. \ref{fig7}, we explore the relationship between the computational complexity and the VR generation probability $p$. As the value of $p$ increases, the computational complexity of both the proposed VR-OGRK algorithm and the proposed VR-OAHK algorithm also increases. A larger $p$ indicates that the spatial non-stationary effect of the communication system is less pronounced. Therefore, the computational complexity reduction achieved by the proposed VR-OGRK algorithm by utilizing the spatial non-stationary effect will be weakened, and will eventually degenerate to the performance of standard GK algorithm. Fortunately, owing to the fast convergence rate, the proposed VR-OAHK algorithm substantially reduces the computational complexity even at high values of $p$ compared to the SWOR-ERK and URK algorithm. This is because the proposed VR-OAHK algorithm utilizes information from multiple hyperplanes, in contrast to the VR-OGRK algorithm and traditional Kaczmarz-type algorithms, which rely on a single hyperplane. Therefore, the proposed VR-OAHK algorithm possesses promising potentials to enable  future ELAA systems even in case of weak spatial non-stationary effects.
\vspace{-0.3cm}
\section{Conclusion}
In this paper, we proposed a low-complexity iterative precoding design for near-field multiuser systems. Firstly, we reformulated the precoding design problem as a least square problem, which can be addressed by utilizing the iterative Kaczmarz algorithm. Next, to address the limitations of existing algorithms and leverage the properties of spatial non-stationarity, we proposed a VR-OGRK algorithm to enhance convergence rate and significantly reduce computational complexity. We also proposed a novel VR-OAHK algorithm based on graph theory and the idea of aggregation hyperplane to further reduce computational complexity. Furthermore, the two proposed algorithms were shown to exhibit global exponential convergence. Simulation results demonstrated the effectiveness of the two proposed low-complexity algorithms.
\section*{Appendix A}
Consider the following optimization problem
	\begin{align} \label{lemma1_problem}
			\mathbf{v}=\arg \min\limits_{\mathbf{v}} f(\mathbf{v})= \| \mathbf{G}\mathbf{v} -\mathbf{g}\|^2_2,
		\end{align}
	where $\mathbf{G}=[\mathbf{H};\sqrt{\xi} {\mathbf{I}}_K] \in \mathbb{C}^{(N_t+K) \times K}$, $\mathbf{g}=[\mathbf{0}_{N_t};\frac{\mathbf{s}}{\sqrt{\xi}}] \in \mathbb{C}^{(N_t \times K)}$.
	Since $f(\mathbf{v})$ is convex with $\mathbf{v}$, by checking the optimality condition, the optimal solution of $\mathbf{v}$ can be expressed as
	\begin{align}
			&\frac{\partial{f(\mathbf{v})}}{\partial{\mathbf{v}}}=2\mathbf{G}^{\rm{H}}\mathbf{G}\mathbf{v}-2\mathbf{G}^{\rm{H}}\mathbf{g}=\mathbf{0} \nonumber \\
			&\Rightarrow	\mathbf{v}=(\mathbf{H}^{\rm{H}}{\mathbf{H}}+\xi{\mathbf{I}_{K}})^{-1}\mathbf{s}	.
		\end{align}
Hence, the proof of Theorem 1 is completed.

	\vspace{-0.3cm}
	\section*{Appendix B}
	At the $t$-th iteration, we assume that the $i$-th row of $\mathbf{A}$, i.e., $\mathbf{a}^{\rm{H}}_i$ is selected, and then the solution for the next iteration $\mathbf{x}^{t+1}$ is updated as follows
	\begin{align}\label{update2}
		\mathbf{x}^{(t+1)}=\mathbf{x}^{(t)}+\frac{{b}_i-\mathbf{a}^{\rm{H}}_i\mathbf{x}^{(t)}}{\|\mathbf{a}^{\rm{H}}_i\|^2_2}\mathbf{a}_i.
	\end{align}
	
	Next, if the $j$-th row of $\mathbf{A}$, i.e., $\mathbf{a}^{\rm{H}}_j$ which is orthogonal to $\mathbf{a}^{\rm{H}}_i$, i.e., $\mathbf{a}_j^{\rm{H}}\mathbf{a}_i=0$, the residual for $\mathbf{a}^{\rm{H}}_j$ at the $t+1$ iteration is computed as follows
	\begin{align}
		{{r}}^{(t+1)}_j&={{b}_j-\mathbf{a}^{\rm{H}}_j\mathbf{x}^{(t+1)}} \nonumber \\
		&\overset{(b)}{=}{{b}_j-\mathbf{a}^{\rm{H}}_j(\mathbf{x}^{(t)}+\frac{{b}_i-\mathbf{a}^{\rm{H}}_i\mathbf{x}^{(t)}}{\|\mathbf{a}^{\rm{H}}_i\|^2_2}\mathbf{a}_i)} \nonumber \\
		&\overset{(c)}{=}{{b}_j-\mathbf{a}^{\rm{H}}_j\mathbf{x}^{(t)}}=	{{r}}^{(t)}_j,
	\end{align}
	where $(b)$ denotes the update criteria (\ref{update2}), $(c)$ uses the orthogonal property between $\mathbf{a}_i$ and $\mathbf{a}_j$. Clearly, after selecting the $i$-th row of $\mathbf{A}$, only the residuals corresponding to the rows that are not orthogonal to the $i$-th row are affected.
Hence, the proof of Theorem 2 is completed.
	\vspace{-0.25cm}
	\section*{Appendix C}
 We assume that there exists a set of mutually orthogonal hyperplanes in the linear system, denoted as $\mathcal{F}$, i.e., $\mathbf{a}_j^{\rm{H}}\mathbf{a}_i=0, \forall i, j \in \mathcal{F}, i \neq j$.
	At the $t$-th iteration, the aggregation hyperplane is generated by the hyperplanes in $\mathcal{F}$, i.e.,
	\begin{equation} 
		 [{\boldsymbol{\varphi}}^{(t)}]_i=\left\{
		\begin{aligned}
			&\frac{{b}_i-\mathbf{a}^{\rm{H}}_i\mathbf{x}^{(t)}}{\|\mathbf{a}_i^{\rm{H}}\|^2_2}, i\in \mathcal{F},\\
			&0, i \notin \mathcal{F}.
		\end{aligned}
		\right.
	\end{equation}
	
	Thus, the next iteration solution $\mathbf{x}^{(t+1)}$ is calculated as 
	\begin{align}
	\mathbf{x}^{(t+1)}=\mathbf{x}^{(t)}+\frac{(\boldsymbol{\varphi}^{(t)})^{\rm{H}}\mathbf{r}^{(t)}}{\|\mathbf{A}^{\rm{H}}\boldsymbol{\varphi}^{(t)}\|^2_2}\mathbf{A}^{\rm{H}}\boldsymbol{\varphi}^{(t)}.
	\end{align}
	
To demonstrate the relationship
 $\mathbf{A}_{[\mathcal{F},:]}\mathbf{x}^{(t+1)}=\mathbf{b}_{\mathcal{F}}$, we need to prove that $\mathbf{a}^{\rm{H}}_j\mathbf{x}^{(t+1)}=b_j, \forall j \in \mathcal{F}$, as demonstrated below:
	\begin{align}
		\mathbf{a}^{\rm{H}}_j\mathbf{x}^{(t+1)}&=	\mathbf{a}^{\rm{H}}_j\mathbf{x}^{(t)}+	\mathbf{a}^{\rm{H}}_j\frac{(\boldsymbol{\varphi}^{(t)})^{\rm{H}}\mathbf{r}^{(t)}}{\|\mathbf{A}^{\rm{H}}\boldsymbol{\varphi}^{(t)}\|^2_2}\mathbf{A}^{\rm{H}}\boldsymbol{\varphi}^{(t)} \nonumber \\
		&= \mathbf{a}^{\rm{H}}_j\mathbf{x}^{(t)}+\frac{(\boldsymbol{\varphi}^{(t)})^{\rm{H}}\mathbf{r}^{(t)}}{\|\mathbf{A}^{\rm{H}}\boldsymbol{\varphi}^{(t)}\|^2_2}\mathbf{a}^{\rm{H}}_j\mathbf{A}^{\rm{H}}\boldsymbol{\varphi}^{(t)} \nonumber \\
		&= \mathbf{a}^{\rm{H}}_j\mathbf{x}^{(t)}+\frac{\sum_{i\in \mathcal{F}}^{|{\mathcal{F}}|} [{\boldsymbol{\varphi}}^{(t)}]_i{r}^{(t)}_i}{\|\sum_{i\in \mathcal{F}}^{|{\mathcal{F}}|} [{\boldsymbol{\varphi}}^{(t)}]_i\mathbf{a}_i\|^2_2}\mathbf{a}^{\rm{H}}_j \sum_{i\in \mathcal{F}}^{|{\mathcal{F}}|} [{\boldsymbol{\varphi}}^{(t)}]_i\mathbf{a}_i.
	\end{align}
	
By using the property of orthogonality, we can establish the following relationship
	\begin{align}\label{eq1}
	\mathbf{a}^{\rm{H}}_j\mathbf{x}^{(t+1)}&=\mathbf{a}^{\rm{H}}_j\mathbf{x}^{(t)}+\frac{\sum_{i\in \mathcal{F}}^{|{\mathcal{F}}|} [{\boldsymbol{\varphi}}^{(t)}]_i{r}^{(t)}_i}{\sum_{i\in \mathcal{F}}^{|{\mathcal{F}}|}\| [{\boldsymbol{\varphi}}^{(t)}]_i\mathbf{a}_i\|^2_2} [{\boldsymbol{\varphi}}^{(t)}]_j\|\mathbf{a}_j\|_2^2.
	\end{align}
	
Next, by substituting (\ref{coefficient}) into (\ref{eq1}), we can arrive at the following result
	\begin{align} 
	\mathbf{a}^{\rm{H}}_j\mathbf{x}^{(t+1)}	&=\mathbf{a}^{\rm{H}}_j\mathbf{x}^{(t)}+\frac{\sum_{i\in \mathcal{F}}^{|{\mathcal{F}}|} [{\boldsymbol{\varphi}}^{(t)}]_i{r}^{(t)}_i}{\sum_{i\in \mathcal{F}}^{|{\mathcal{F}}|} [{\boldsymbol{\varphi}}^{(t)}]_i{r}^{(t)}_i} [{\boldsymbol{\varphi}}^{(t)}]_j\|\mathbf{a}_j\|_2^2 \nonumber \\
	&=\mathbf{a}^{\rm{H}}_j\mathbf{x}^{(t)}+{r}^{(t)}_j \nonumber \\
	&=\mathbf{a}^{\rm{H}}_j\mathbf{x}^{(t)}+(b_j-\mathbf{a}^{\rm{H}}_j\mathbf{x}^{(t)})=b_j.
	\end{align}
	
	Hence, the proof of Theorem 4 is completed.
	 \vspace{-0.3cm}
	\section*{Appendix D}
We assume that there exists a set of mutually orthogonal hyperplanes in the linear system, denoted as $\mathcal{F}$, i.e., $\mathbf{a}_j^{\rm{H}}\mathbf{a}_i=0, \forall i, j \in \mathcal{F}, i \neq j$.
	At the $t$-th iteration, the aggregation hyperplane is generated by the hyperplanes in $\mathcal{F}$, i.e.,
	\begin{equation} 
		 [{\boldsymbol{\varphi}}^{(t)}]_i=\left\{
		\begin{aligned}
			&\frac{{b}_i-\mathbf{a}^{\rm{H}}_i\mathbf{x}^{(t)}}{\|\mathbf{a}_i^{\rm{H}}\|^2_2}, i\in \mathcal{F},\\
			&0, i \notin \mathcal{F}.
		\end{aligned}
		\right.
	\end{equation}
	
	Thus, the next iteration solution $\mathbf{x}^{(t+1)}$ is calculated as 
	\begin{align}
		\mathbf{x}^{(t+1)}&=\mathbf{x}^{(t)}+\frac{(\boldsymbol{\varphi}^{(t)})^{\rm{H}}\mathbf{r}^{(t)}}{\|\mathbf{A}^{\rm{H}}\boldsymbol{\varphi}^{(t)}\|^2_2}\mathbf{A}^{\rm{H}}\boldsymbol{\varphi}^{(t)} \nonumber \\
		&=\mathbf{x}^{(t)}+\frac{\sum_{i\in \mathcal{F}}^{|{\mathcal{F}}|} [{\boldsymbol{\varphi}}^{(t)}]_i{r}^{(t)}_i}{\|\sum_{i\in \mathcal{F}}^{|{\mathcal{F}}|} [{\boldsymbol{\varphi}}^{(t)}]_i\mathbf{a}_i\|^2_2} \sum_{i\in \mathcal{F}}^{|{\mathcal{F}}|} [{\boldsymbol{\varphi}}^{(t)}]_i\mathbf{a}_i \nonumber \\
		&=\mathbf{x}^{(t)}+\frac{\sum_{i\in \mathcal{F}}^{|{\mathcal{F}}|} [{\boldsymbol{\varphi}}^{(t)}]_i{r}^{(t)}_i}{\sum_{i\in \mathcal{F}}^{|{\mathcal{F}}|} [{\boldsymbol{\varphi}}^{(t)}]_i{r}^{(t)}_i} \sum_{i\in \mathcal{F}}^{|{\mathcal{F}}|} [{\boldsymbol{\varphi}}^{(t)}]_i\mathbf{a}_i \nonumber \\
		&=\mathbf{x}^{(t)}+\sum_{i\in \mathcal{F}}^{|{\mathcal{F}}|} [{\boldsymbol{\varphi}}^{(t)}]_i\mathbf{a}_i.
	\end{align}
	
	Hence, the proof of Proposition 1 is completed.
	\vspace{-0.3cm}
	\bibliographystyle{IEEEtran}
	\bibliography{ref}

\end{document}